\documentclass[a4paper,12pt,twoside]{article}

\pdfoutput=1 
\usepackage[utf8]{inputenc}
\usepackage{arxiv}
\usepackage{graphics}
\usepackage{graphicx}
\usepackage{verbatim}
\usepackage{multicol}
\usepackage{mathtools}
\usepackage{booktabs}
\usepackage{subcaption}
\usepackage{algorithm}
\usepackage{algorithmic}
\usepackage{sidecap}
\usepackage{amssymb}
\usepackage{bbm}
\usepackage{array}
\usepackage{amsthm}
\usepackage{hyperref}
\usepackage{cleveref}
\usepackage{epstopdf}
\usepackage{graphicx,epstopdf} %
\usepackage{authblk}

\DeclarePairedDelimiter\abs{\lvert}{\rvert}%
\DeclarePairedDelimiter\norm{\lVert}{\rVert}%
\makeatletter
\let\oldabs\abs
\def\abs{\@ifstar{\oldabs}{\oldabs*}}
\let\oldnorm\norm
\def\norm{\@ifstar{\oldnorm}{\oldnorm*}}
\makeatother

\newcommand*\diff{\mathop{}\!\mathrm{d}}

\DeclareMathOperator*{\argmin}{arg\,min}

\newtheorem{theorem}{Theorem}[section]
\newtheorem{proposition}[theorem]{Proposition}
\newtheorem{conditions}[theorem]{Conditions}
\newtheorem{exmp}[theorem]{Example}
\newtheorem{remark}[theorem]{Remark}

\numberwithin{equation}{section}

\renewcommand{\thefootnote}{\arabic{footnote}}

\ifpdf
\hypersetup{pdfauthor={M. Holden, M. Pereyra, K. Zygalakis},
  pdftitle={Bayesian inference with data-driven image priors encoded by neural networks}
}
\fi

\author[1,2,3]{Matthew Holden}
\author[1,3]{Marcelo Pereyra}
\author[1]{Konstantinos C. Zygalakis}
\affil[1]{School of Mathematical and Computer Sciences, 
Heriot-Watt University, Edinburgh, Scotland}
\affil[2]{School of Mathematics, University of Edinburgh, 
Edinburgh, Scotland}
\affil[3]{Maxwell Institute for Mathematical Sciences, Bayes 
Centre, 47 Potterrow, Edinburgh, Scotland}

\setcounter{Maxaffil}{0}

\title{Bayesian imaging with data-driven priors encoded by neural networks: theory, methods, and algorithms}

\begin{document}

\maketitle

\begin{abstract}
This paper proposes a new methodology for performing Bayesian inference in imaging inverse problems where the prior knowledge is available in the form of training data. Following the manifold hypothesis and adopting a generative modelling approach, we construct a data-driven prior that is supported on a sub-manifold of the ambient space, which we can learn from the training data by using a variational autoencoder or a generative adversarial network. We establish the existence and well-posedness of the associated posterior distribution and posterior moments under easily verifiable conditions, providing a rigorous underpinning for Bayesian estimators and uncertainty quantification analyses. Bayesian computation is performed by using a parallel tempered version of the preconditioned Crank-Nicolson algorithm on the manifold, which is shown to be ergodic and robust to the non-convex nature of these data-driven models. In addition to point estimators and uncertainty quantification analyses, we derive a model misspecification test to automatically detect situations where the data-driven prior is unreliable, and explain how to identify the dimension of the latent space directly from the training data. The proposed approach is illustrated with a range of experiments with the MNIST dataset, where it outperforms alternative image reconstruction approaches from the state of the art. A model accuracy analysis suggests that the Bayesian probabilities reported by the data-driven models are also remarkably accurate under a frequentist definition of probability.
\end{abstract}

\section{Introduction}
\renewcommand{\thefootnote}{\arabic{footnote}}

Inverse problems are ubiquitous in imaging science.  Canonical examples include image denoising \cite{houdard2018high}, deconvolution \cite{beck2009fast, afonso2010fast}, compressive sensing \cite{lucka2018enhancing} super-resolution \cite{romano2017little}, tomographic reconstruction \cite{afonso2010fast}, source separation \cite{iordache2012total}, fusion \cite{li1995multisensor}, and phase retrieval \cite{elser2018benchmark}.  The main challenge in solving these problems is that they are typically ill-conditioned or ill-posed, and require additional information to be provided in order to obtain a well-posed solution.\footnote{A problem is well-posed in the sense of Hadamard if a solution exists, is unique, and depends continuously on the observed data.}  The standard approach has been to provide this information in the form of a regularisation term, and solve the inverse problem in a variational framework, where the solution is obtained by minimising an energy function composed of a sum of a data-fidelity term and the specified regularisation term \cite{kaipio2006statistical}.   The Bayesian statistical framework, on the other hand, treats the image of interest as a random quantity. Additional information is then included by specifying a marginal distribution for the image, known as the prior distribution.  The observed data are then incorporated through a likelihood function, to obtain a posterior distribution --- the conditional distribution of the image of interest, given the observed data \cite{kaipio2006statistical}.  While the interpretations of these frameworks differ, they are closely related, as the solution which solves a variational energy function will very often maximise a corresponding posterior density function in the Bayesian setting.

For ill-posed problems, the choice of prior has a significant impact on the solution.  The traditional approach has been to define it analytically, as a hand-crafted function, chosen to encourage specific desired properties such as piece-wise regularity, sparsity in some appropriate basis, smoothness, or an expected spectral profile. Special attention is given in the literature to models with an underlying convex geometry, for which maximum-a-posteriori estimation can be formulated as a convex optimisation problem \cite{ChambollePock}. While these regularisers capture some important aspects of the images, they are overly simplistic and misspecified in the sense that they do not accurately describe their probability distribution (from a modelling perspective, these priors would be poor generative models).

To deliver more accurate solutions capturing fine detail, an alternative approach known as end-to-end modelling has been introduced.   These methods take advantage of the fact that, in many applications, there exists an abundance of prior knowledge in the form of training examples.  Harnessing this information can enable the construction of better models resulting in more accurate solutions.  End-to-end models typically take the form of a neural network taking as input the observed data and returning a reconstructed image.   The training process then optimises the network parameters to learn a mapping minimising some loss function evaluated the reconstruction performance on true images from the training data set.  Methods can be task-agnostic \cite{zhang2017beyond, dong2014learning, gharbi2016deep}, or include the degradation model in the network architecture \cite{gregor2010learning, chen2016trainable}.  These models obtain results far superior to classic variational methods, across a variety of image reconstruction tasks.  In image denoising, methods have progressed to the stage where it has been argued that `\textit{removal of zero-mean white additive Gaussian noise from an image is a solved problem in image processing}' \cite{romano2017little}.

Despite this success, the variational and Bayesian frameworks retain a key advantage:  modularity. The clearly defined roles of the data fidelity and regularisation terms make these methods flexible to changes in the forward model.  End-to-end models are highly sensitive to the task for which they have been trained.  This training process can be expensive, and requires a large amount of data, in the form of clean images and corresponding observations corrupted by the forward model, which may not be available in many applications.  A well-chosen prior distribution or regulariser, on the other hand, can be applied to any forward operator. This has led to a lot of recent interest in the idea of applying existing machine learning techniques to learn a prior distribution from a database of clean images \cite{arridge2019solving}.  Examples in this area include learning regularising functionals \cite{kobler2020total, lunz2018adversarial, li2020nett} and priors implicitly defined by the architecture of convolutional neural networks \cite{ulyanov2018deep}.  Plug-and-Play methods \cite{venkatakrishnan2013plug, ryu2019plug} regularise the problem by incorporating existing denoising algorithms into iterative methods designed to solve variational problems. Regularisation by Denoising \cite{romano2017little,Reehorst2019} is another important contribution to the topic, which formalises such approaches in a Bayesian setting and provides an explicit derivation of denoiser-based priors in terms of an image-adaptive Laplacian regularization.

This work takes an alternative approach, defining the prior through a generative model.  The use of a generative prior has previously been explored in \cite{bora2017compressed}, which parametrises the prior using a generative adversarial network, followed by inference by optimisation with respect to the latent variable. This optimisation problem is not convex, however, motivating the work of \cite{gonzalez2019solving}, who use a non-deterministic generative model which enables optimisation in an augmented state space, and a gradient descent which is guaranteed to converge.  Alternative approaches to solving inverse problems using the latent variable of a generative model include \cite{menon2020pulse, marinescu2020bayesian}, which are based on previous works aiming to invert generative models and project images to the corresponding latent space \cite{bojanowski2018optimizing, creswell2018inverting, abdal2019image2stylegan}.

Although motivated through the Bayesian framework, the aforementioned works are variational in spirit, treating the prior as a regularisation term in an optimisation problem and obtaining a single point estimate maximising the posterior probability density.  While the works are promising, many questions about the theoretical properties of the resulting posterior distribution remain unanswered, including well-posedness of the posterior distribution and existence and well-posedness of Bayesian estimators. In addition, the computational algorithms used do not necessarily have convergence guarantees, or may only converge locally.  

In this paper, we take a fully Bayesian approach to the problem, enabling us to answer some of these outstanding questions.  Instead of solving for a single estimated image which minimises an optimization problem, we treat images as random quantities and use Bayes' formula to combine the observed and prior information to obtain a posterior distribution.  Rather than directly condensing this posterior into a single point estimate, as in previous works, we seek to represent the full posterior distribution by Monte Carlo sampling. This approach allows us to compute Bayesian estimators, but also to tackle a number of important questions which are beyond the scope of variational methods.  For example, since data-driven priors can be highly specialised to the specific image classes in the training dataset, one may be concerned that misapplication to unknown images that are not well represented by the training dataset will lead to unreliable results. Adopting a statistical approach, we propose a hypothesis test for identifying such scenarios directly from the observed data, avoiding the misapplication of an incorrect prior.

This approach also provides a means of investigating the accuracy of the prior for the intended dataset.  As previously stated, analytic priors do not usually accurately describe the probability distribution of the images considered. As a result, Bayesian probabilities are not valid in a frequentist sense (i.e., frequencies observed over a large number of repetitions of the experiment). A data-driven prior could hope to accurately represent the true distribution of considered images, in which case the Bayesian and the frequentist probabilities coincide.

Tackling each of these questions requires the evaluation of integrals with respect to the posterior distribution.  In high dimensions, these integrals are analytically intractable and require approximation by Monte Carlo integration by using samples from the posterior distribution. Markov Chain Monte Carlo (MCMC) sampling methods are well-established in the context of imaging problems \cite{kaipio2006statistical, pereyra2015survey}, but efficient MCMC sampling with high-dimensional data-driven priors is highly non-trivial. In this paper, we mitigate this difficulty by learning a prior in the form of a generative model which represents the distribution on a latent manifold of much lower dimensionality. To summarise, in this work, we
\begin{enumerate}
	\item Introduce a methodology to incorporate a data-driven prior in a Bayesian statistical framework, where the prior takes the form of a generative model, and show that easily verified conditions guarantee this defines a well-posed Bayesian inverse problem.
	\item Exploit generative models with inherent dimensionality reduction to construct a posterior distribution which can be sampled efficiently using a robust MCMC algorithm.
	\item Illustrate the power of the proposed methodology by interrogating the posterior distribution in a variety of numerical experiments, including point estimation and comparison with state of the art results; visualisation of uncertainty; detection of out-of-dataset observations; and an analysis of the quality of the prior obtained from the data.
\end{enumerate}

The remainder of the paper is organized as follows. Section \ref{sec:problemstatement} introduces notation and defines the problem.  Section \ref{sec:meth} introduces the generative model used to define the prior and outlines how it can be used in a gradient-free sampling algorithm \cite{cotter2013mcmc}.  In Section \ref{sec:num}, numerical experiments on the MNIST dataset \cite{lecun1998mnist} are presented for three canonical imaging tasks, in which we investigate the questions previously mentioned,  and obtain point estimates superior to those obtained using an optimisation approach.  We conclude in Section \ref{sec:discus} with a discussion about future directions of this research. 

\section{Problem Statement} \label{sec:problemstatement}

We consider imaging problems involving an unknown image $x \in \mathbb{R}^d$ and some observed data $y \in \mathbb{C}^p$, related to $x$ through {a} statistical model with likelihood function $p(y|x)$. We are particularly interested in problems where the estimation of $x$ from $y$ is ill-posed or ill-conditioned ({i.e., either the problem does not admit a unique solution that changes continuously with $y$, or there exists a unique solution but it is not stable w.r.t. small perturbations in $y$). To illustrate throughout, we will consider problems of the form $y = Ax + w$ with $w \sim \mathcal{N}(0,\sigma^2 \mathbb{I}_p)$ and $\sigma>0$ where the observation operator $A \in  \mathbb{C}^{d \times p}$ is rank deficient, or problems where $A^\top A$ is full rank but has a poor {condition} number}. It is important to emphasize that these are simply illustrative examples, and the methodology we propose is not restricted to problems of this form, and problems with alternative noise distributions, or a nonlinear forward operator, for example, could also be tackled.

The ill-posedness of the estimation problem implies that additional information is required in order to reduce the uncertainty about $x$ and deliver meaningful solutions. Within the Bayesian framework, this is achieved by exploiting the marginal distribution of $x$, the so-called prior distribution $p(x)$, which is combined with the likelihood to derive the posterior distribution
\begin{equation*}
p(x|y)=\frac{p(y|x)p(x)}{\int_{\mathbb{R}^{d}} p(y|x)p(x)dx}\, .
\end{equation*}

Of course, the true marginal distribution of $x$ is unknown, so approximations or operational priors are used instead, as we have already discussed. Most Bayesian imaging methods use operational priors that are specified analytically. Analytical priors usually promote specific expected or desired properties about $x$, such as sparsity on some appropriate basis or dictionary, smoothness, or piecewise regularity. Special attention is given in the literature to models with density functions that are log-concave, as this allows the use of computationally efficient Bayesian computation algorithms and greatly simplifies their theoretical convergence analysis \cite{ChambollePock, durmus2018efficient}. 

In this paper we study a Bayesian methodology for settings in which prior knowledge is available in the form of a collection of training examples $\{x^\prime_i\}_{i=1}^n$, which we consider to be independent samples from $p(x)$. We are particularly interested in questions related to performing Bayesian analysis and computation in this context, where $p(x)$ is only known through the representatives $\{x^\prime_i\}_{i=1}^n$. The posterior distribution of interest, henceforth denoted by $\pi$, is given by 
\begin{equation}\label{eqn:piDist}
\pi (x) \triangleq p(x|y,\{x^\prime_i\}_{i=1}^n) =\frac{p(y|x)p(x|\{x^\prime_i\}_{i=1}^n)}{\int_{\mathbb{R}^{d}} p(y|x)p(x|\{x^\prime_i\}_{i=1}^n)dx}\, ,
\end{equation}
where the prior $p(x|\{x^\prime_i\}_{i=1}^n)$ explicitly models that the information available about the marginal of $x$ comes from the observed training examples $\{x^\prime_i\}_{i=1}^n$.

Performing Bayesian computation for $\pi$ is highly non-trivial given the dimensionality involved in imaging problems, and because a construction of $p(x|\{x^\prime_i\}_{i=1}^n)$ that accurately captures the information in $\{x^\prime_i\}_{i=1}^n$ will not naturally lead to a posterior distribution which is amenable to Bayesian computation.
In a manner akin to \cite{gonzalez2019solving}, in this paper we seek to address this difficulty by exploiting the so-called \emph{manifold hypothesis} \cite{fefferman2016testing} that plays a central role in modern imaging sciences; \emph{i.e.}, that natural images take values in the neighbourhood of a low-dimensional sub-manifold of the ambient space. Operating on this manifold of dramatically reduced dimensionality will simultaneously allow us to effectively regularise the estimation problem and to perform computations efficiently.

\section{Proposed Bayesian methodology} \label{sec:meth}
The methodology we propose achieves this dimensionality reduction by defining the prior through a generative model, as in \cite{bora2017compressed, gonzalez2019solving, marinescu2020bayesian}.  In this section, we show that such a prior leads to a well-posed Bayesian inverse problem, under easily-satisfiable conditions.  While posterior well-posedness is not often addressed in the imaging context for non-convex problems, it is a crucial property which ensures that point estimates and uncertainty quantification analyses obtained from the posterior are stable with respect to the data.  We then take a fully Bayesian approach to the problem, showing how Markov Chain Monte Carlo can be used to generate samples from the posterior distribution we obtain.  We further establish that the resulting Markov Chain is ergodic, thus guaranteeing that large samples obtained through the MCMC algorithm represent the statistical properties of the posterior distribution.

\subsection{Defining a Generative Prior} \label{subsec:vae}
We seek to construct a prior $p(x|\{x^\prime_i\}_{i=1}^n)$ that is supported on an $m$-dimensional sub-manifold of $\mathbb{R}^d$ and which accurately models the empirical distribution of the training samples $\{x^\prime_i\}_{i=1}^n$. To achieve this, we first introduce a latent variable $z \in \mathbb{R}^{m}$ and assign it a tractable marginal distribution $p(z)$ on $\mathbb{R}^m$, for example $z \sim \mathcal{N}(0,\mathbb{I}_m)$. We then use the samples $\{x^\prime_i\}_{i=1}^n$ to learn a mapping $\mu: \mathbb{R}^{m} \mapsto \mathbb{R}^{d}$ such that the $\mu$-pushfoward measure of $z$ on $\mathbb{R}^d$ is close to $\{x^\prime_i\}_{i=1}^n$, as measured by some appropriate goodness-of-fit criterion. The distribution of interest $p(x|\{x^\prime_i\}_{i=1}^n)$ is then obtained by marginalising $z$ from an augmented distribution $p(x,z|\{x^\prime_i\}_{i=1}^n) = p(x|z)p(z)$, where $p(x|z)$ depends on $\mu$, which contains the information from $\{x^\prime_i\}_{i=1}^n$. For computational efficiency, we focus on the case where $\mu$ is deterministic, the distributions $p(x,z|\{x^\prime_i\}_{i=1}^n)$ and $p(x|\{x^\prime_i\}_{i=1}^n)$ are degenerate, and $p(x|\{x^\prime_i\}_{i=1}^n)$ is supported on the $m$-dimensional sub-manifold of $\mathbb{R}^d$ which is specified by $\mu$.  (See Figure \ref{fig:pushforwardsamples} for an illustration of this approach.)

The accurate estimation of the mapping $\mu$ from $\{x^\prime_i\}_{i=1}^n$ is a long-standing problem in the generative modelling machine learning literature. Recently, strategies where $\mu$ is represented by a deep neural network have received a lot of attention, with adversarial networks (GANs) \cite{goodfellow2014generative} and variational autoencoders (VAEs) \cite{kingma2013auto} being two prominent examples. In this paper, we chose to use a VAE because, in addition to an estimate of $\mu$, VAEs also provide an estimate of a pseudo-inverse of $\mu$ that is useful for identifying the manifold dimension $m$ and for performing uncertainty quantification analyses. However, the proposed methodology is presented in a general form and could equally be implemented using a GAN, or other models to generate images by sampling a latent variable with a specified distribution in a lower-dimensional space.

To repeat, VAEs provide a specific way of estimating $\mu$ to construct the augmented distribution $p(x,z|\{x^\prime_i\}_{i=1}^n)$, which admits $p(x|\{x^\prime_i\}_{i=1}^n)$ as marginal. A key ingredient in the VAE construction is the approximation of the unknown conditional distribution $p(x|z,\{x^\prime_i\}_{i=1}^n)$ by a parametric approximation $p_\theta (x|z)$ based on a neural network with weights or coefficients $\theta$. This distribution is known as the decoder because it relates the latent (or codified) image representation $z$ to the ambient space image representation $x$ (notice that the decoder is stochastic and hence the image $x$ decoded from any $z$ is a random quantity). Following a maximum likelihood estimation approach, the value of $\theta$ is set to maximise the marginal likelihood of $\{x^\prime_i\}_{i=1}^n$, which under an independence assumption can be expressed as $p_\theta(x^\prime_1,\ldots,x^\prime_n) = \prod_{i=1}^n p_\theta(x^\prime_i)$, where $p_\theta(x^\prime_i) = \int p_\theta(x^\prime_i|z) p(z)  \, dz$.

The above maximum likelihood estimation problem is computationally intractable so a variational lower bound on the log-likelihood is optimised instead.  For this, an approximation $q_\phi(z|x)$ of $p(z|x,\{x^\prime_i\}_{i=1}^n)$, is introduced, also parametrised by a deep neural network known as the encoder. Applying Jensen's inequality to
\[
p_\theta(x^\prime_i) = \int p(z) \frac{p_\theta(x^\prime_i|z)}{q_\phi(z|x^\prime_i)} q_\phi(z|x^\prime_i) \, dz,
\]
one obtains the following {variational lower bound} on the log-likelihood
\begin{equation} \label{ELBO}
\log p_\theta(x^\prime_i) \geq \mathcal{L}(\theta, \phi;x^\prime_i) \triangleq \mathbb{E}_{q_\phi} [\log p_\theta(x^\prime_i|z)] - D_{KL} (q_\phi(z|x^\prime_i) || p_Z(z)),
\end{equation}
which is used in VAEs as a surrogate of $\log p_\theta(x^\prime_1,\ldots,x^\prime_n)$ in order to estimate $\theta$ (and $\phi$). A standard choice in VAE modelling, which we adopt in this paper, is as follows. 

\begin{enumerate}
    \item Let $p_\mathcal{N}(z;\mu,\Sigma)$ denote the density of the multivariate $\mathcal{N}(\mu, \Sigma)$ distribution.
    \item Set $p(z) = p_\mathcal{N}(z;0,\mathbb{I}_m)$.
    \item Set $q_\phi(z|x) = p_\mathcal{N}(z;\mu_\phi(x), \sigma_\phi^2(x) \mathbb{I}_m)$, where $\mu_\phi : \mathbb{R}^d \mapsto \mathbb{R}^m$ is the neural network specifying the expectation of the (stochastic) encoding step, and $\sigma^2_\phi : \mathbb{R}^d \mapsto \mathbb{R}_+$ controls the encoding variance.
    \item Set $p_\theta(x|z) = p_\mathcal{N}(x;\mu_\theta(z), \sigma_\theta^2(z) \mathbb{I}_d)$, where $\mu_\theta : \mathbb{R}^m \mapsto \mathbb{R}^d$ is the neural network specifying the expectation of the (stochastic) decoding step, and $\sigma^2_\theta : \mathbb{R}^m \mapsto \mathbb{R}_+$ controls the decoding variance.
\end{enumerate}

The choice of Gaussian encoders and decoders simplifies the estimation of $\theta$ and $\phi$, as in this case the KL divergence in \eqref{ELBO} can be evaluated analytically. The expectation in \eqref{ELBO} is not tractable but can be approximated by Monte Carlo integration by using a subset or mini-batch of $x^\prime_1,\ldots,x^\prime_n$. The weights $\theta$ and $\phi$ specifying the encoding and decoding networks are then estimated by using off-the-shelf stochastic gradient optimisation techniques, such as Adam \cite{kingma2014adam}. We refer the reader to \cite{kingma2013auto} for an excellent introduction to VAEs and more details about their implementation.

\begin{figure}[t]
	\centering
	\begin{subfigure}{.48\textwidth}
		\centering
		\includegraphics[width=\textwidth]{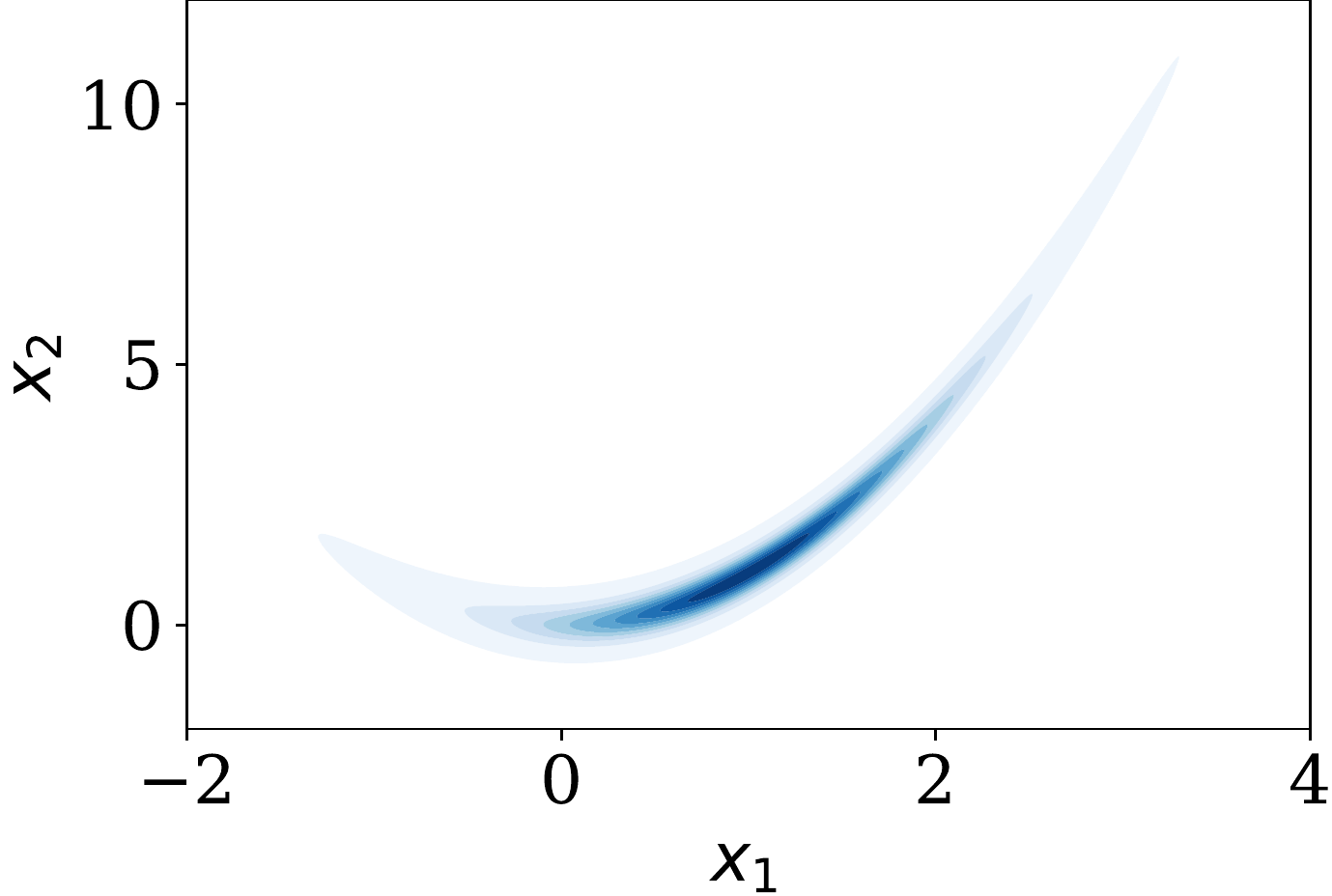}
		\caption{Probability Density}
		\label{subfig:rosendensity}
	\end{subfigure}	
	\hfill
	\begin{subfigure}{.48\textwidth}
		\centering
		\includegraphics[width=\textwidth]{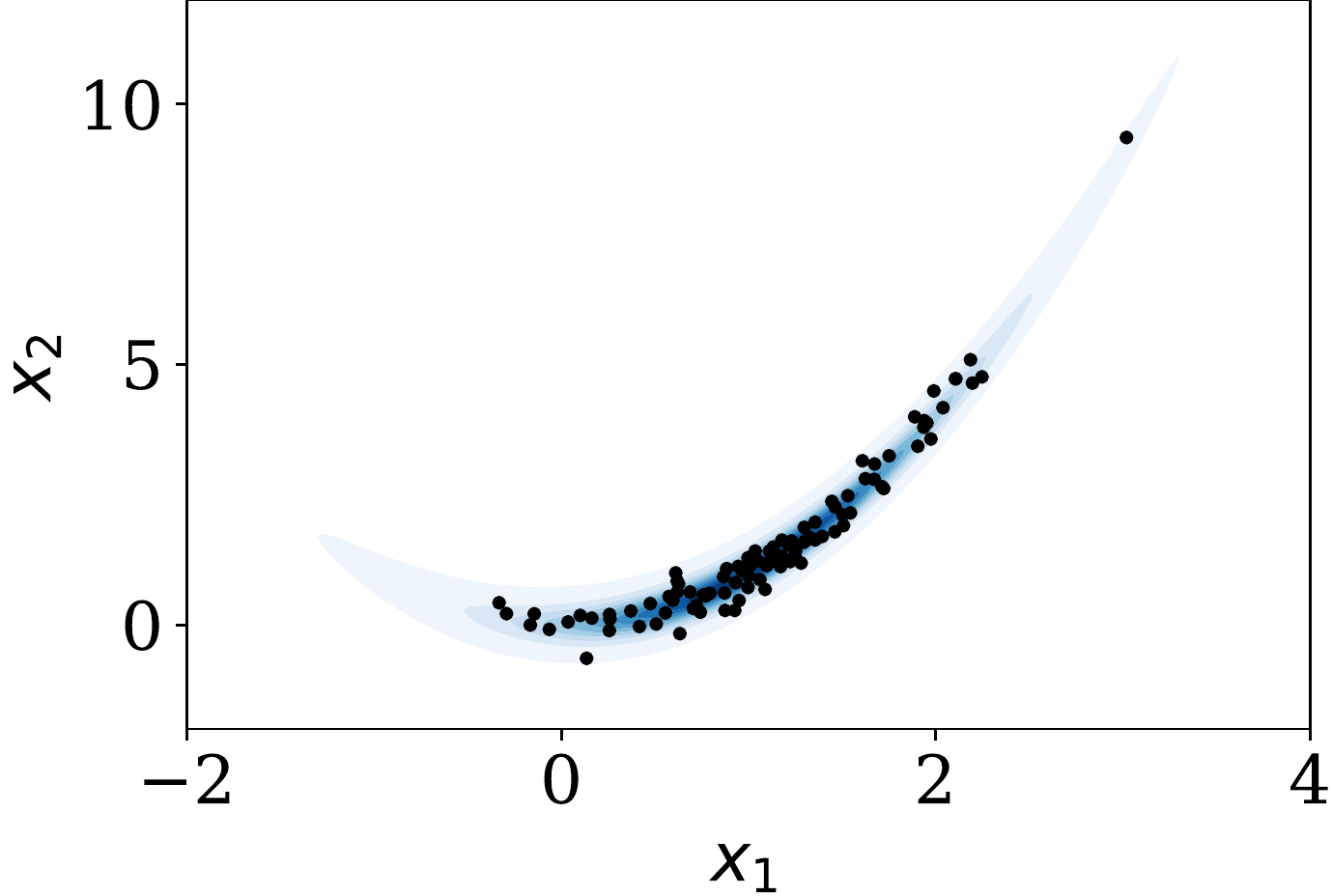}
		\caption{True Samples}
		\label{subfig:truesamples}
	\end{subfigure}	
	\hfill
	\begin{subfigure}{.48\textwidth}
		\centering
		\includegraphics[width=\textwidth]{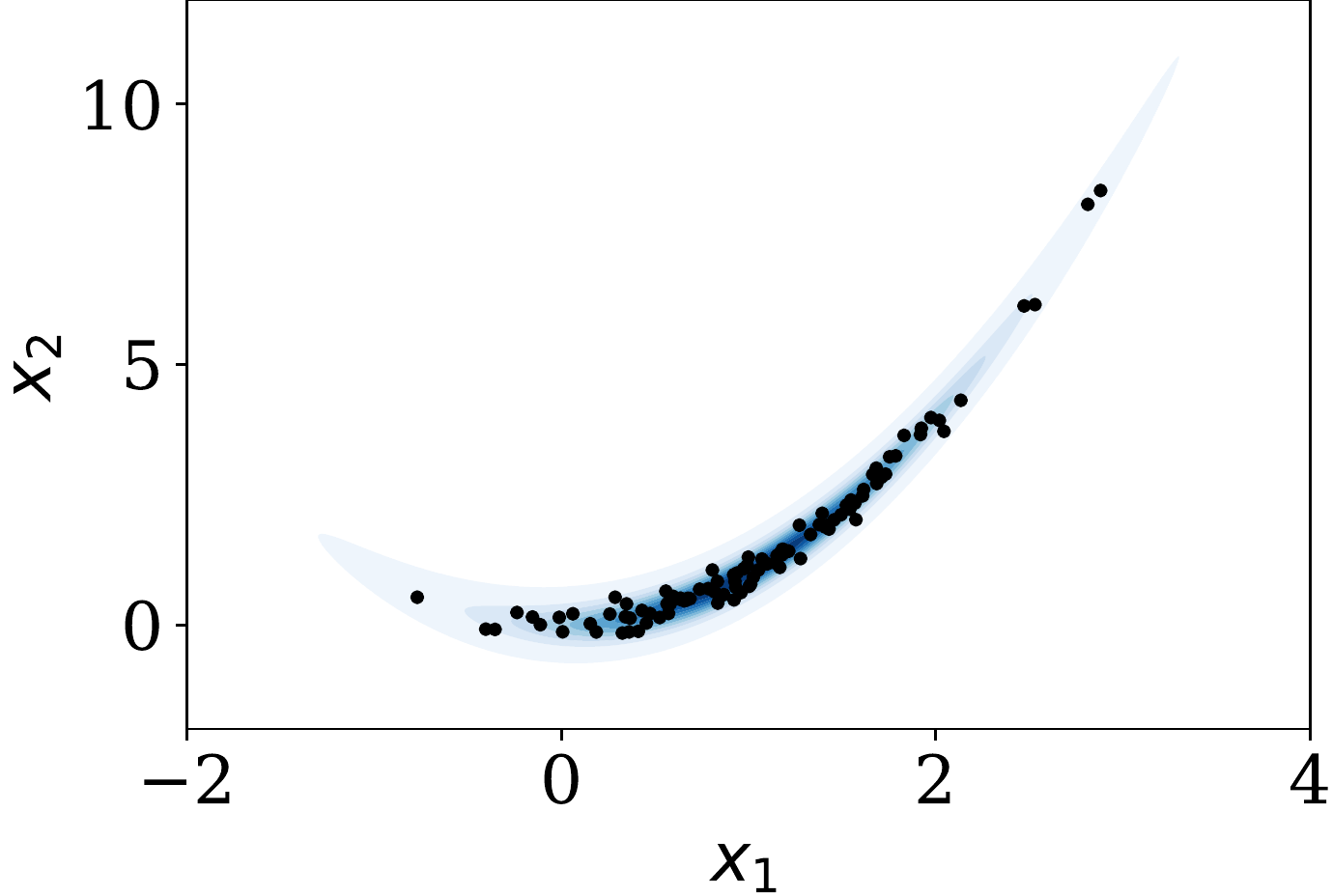}
		\caption{Stochastic VAE samples}
		\label{subfig:stochastic_samples}
	\end{subfigure}
	\hfill
	\begin{subfigure}{.48\textwidth}
		\centering
		\includegraphics[width=\textwidth]{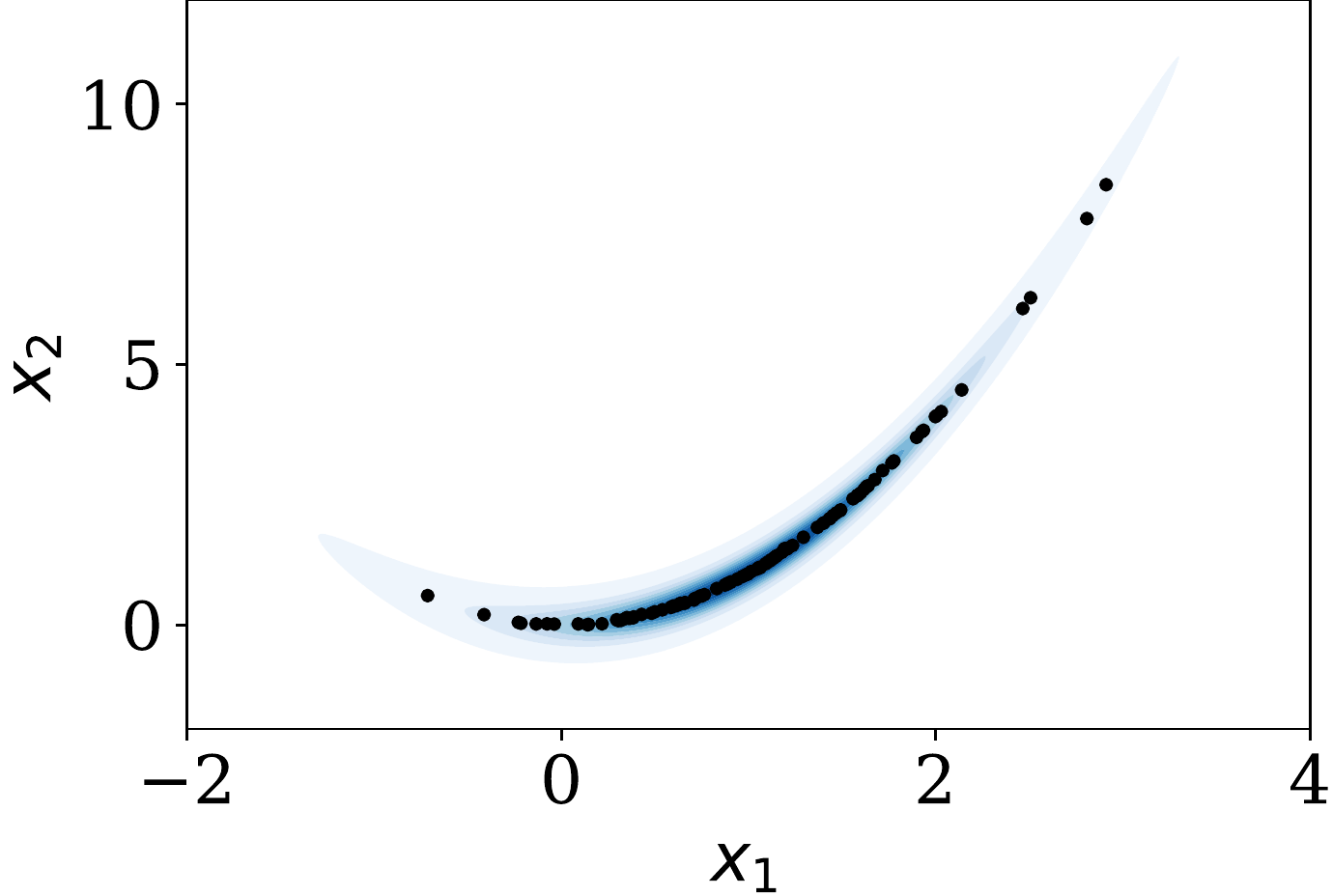}
		\caption{Deterministic VAE samples}
		\label{subfig:deterministic_samples}
	\end{subfigure}
	\caption{Rosenbrock Distribution: (a) Probability density function, as defined in Example \ref{ex:rosenbrock}. (b) 100 samples generated according to true distribution $p(x_1,x_2)$.  (c) 100 samples generated by stochastic VAE with 1D latent space, using the two-step process $z \sim p(z) = p_\mathcal{N}(z;0,1)$ and $(x_1,x_2) \sim p_\theta((x_1,x_2)|z) = p_\mathcal{N}((x_1,x_2);\mu_\theta(z), \sigma_\theta^2(z) \mathbb{I}_2)$. (d) 100 samples generated using $\mu_\theta$-pushforward measure of $z \sim p(z) = p_\mathcal{N}(z;0,1)$ under the deterministic mapping $\mu_\theta$ learnt by the VAE.}
	\label{fig:pushforwardsamples}
\end{figure}

Lastly, note that marginalising $z$ in the VAE construction leads to the conditional distribution
$
p(x|x^\prime_1,\ldots,x^\prime_n) = \int p_{\mathcal{N}}(x|\mu_\theta(z),\sigma_\theta^2(z)\mathbb{I}_d) p_{\mathcal{N}}(z|0,\mathbb{I}_m) \textrm{d}z\,.
$
This distribution is supported on $\mathbb{R}^d$ and concentrates probability mass around the image of mapping $\mu_\theta : \mathbb{R}^m \rightarrow \mathbb{R}^d$, where the term $\sigma_\theta^2$ controls the degree of concentration. Operating directly with this distribution as prior would be very computationally challenging, as the associated posterior
is high-dimensional, nearly degenerate, and highly anisotropic. Instead, to simplify Bayesian computation and analysis tasks, we choose to collapse the stochastic decoder into a deterministic decoder by setting $x = \mu_\theta(z)$ in order to obtain a deterministic mapping from $z$ to $x$ and a degenerate prior on $x$ that is supported on the manifold (i.e., we set $\mu \xleftarrow[]{} \mu_\theta$). This allows one to operate directly with $z|y,x^\prime_1,\ldots,x^\prime_n$ to benefit from dramatically reduced dimensionality; we then define $x|y,x^\prime_1,\ldots,x^\prime_n$ as the $\mu_\theta$-pushforward measure of $z|y,x^\prime_1,\ldots,x^\prime_n$.

\begin{exmp} \label{ex:rosenbrock}
	As a simple example to illustrate this method, consider the 2D Rosenbrock distribution \cite{rosenbrock1960automatic}, defined as
	\[
	p(x_1,x_2) \propto \exp \left\{ - 8(x_2 - x_1^2)^2 + (1 - x_1)^2\right\}.
	\]
	This joint distribution in $x_1$ and $x_2$ is typically quite challenging for MCMC algorithms to sample efficiently, as the mass is concentrated on a narrow curved ridge (Figure \ref{subfig:rosendensity}).  As a result, it is often used as a benchmark problem when testing algorithms.  For an algorithm with the information that $(x_1, x_2)$ lie close to the parabola $x_2 = x_1^2$, sampling is much simpler.  To illustrate how a generative model can capture this type of relationship from training examples, a VAE was trained on $10^5$ samples from the Rosenbrock distribution, using a 1-dimensional latent variable with standard normal prior $p(z) = p_\mathcal{N}(z;0,1)$.  As illustrated in Figure \ref{fig:pushforwardsamples}, the VAE learns the parabolic relationship between $x_1$ and $x_2$.  The figure also illustrates the effect of collapsing the stochastic decoder into a deterministic one, as the deterministically generated samples are restricted to the 1D curve learnt by the VAE.
\end{exmp}
\subsection{Posterior distribution in the latent variable} \label{subsec:posteriorlatent} 

Having estimated the mapping $\mu_\theta$ from the training samples $\{x^\prime_i\}_{i=1}^n$, a posterior distribution in the latent variable can be derived.  The prior is simply the tractable marginal $p(z)$ assigned to the latent variable, while the likelihood is defined as $p(y|z) = p(y|x=\mu_\theta(z))$.  The posterior distribution in the latent variable $z$ is then given by 
\begin{equation} \label{eq:latentpost}
p(z|y,\{x^\prime_i\}_{i=1}^n) \propto p(y|x=\mu_\theta(z)) p(z).
\end{equation}
For ease of notation, we henceforth denote this posterior on the latent space as simply $p(z|y)$.

\begin{exmp} \label{ex:gaussianmodel}
Consider the example $y=Ax + w$ where $w \sim N(0, \sigma_y^2\mathbb{I}_p)$. If $\mu_\theta$ is a generative model for $x$ with prior  $z \sim N(0, \mathbb{I}_m)$, the posterior can be written
\begin{equation} \label{eq:gaussianpost}
p(z|y) \propto \exp \left\{-\frac{(y - A \mu_\theta(z))^\dagger (y - A \mu_\theta(z))}{2 \sigma_y^2}  - \frac{z^\top z}{2} \right\} \, .
\end{equation}
\end{exmp}
Since the latent space is of lower dimension than the pixel space, we cannot transform the latent posterior to give an explicitly evaluable posterior density, $p(x|y)$, in the space of images.  Indeed, defining the distribution of $x|y$ as the pushforward of $z|y$ under $\mu_\theta$ we obtain a posterior in the image space which does not admit a density with respect to the Lebesgue measure on $\mathbb{R}^d$.  Nevertheless, it is possible to sample from $x|y$, since $x$ is assumed to be generated deterministically from the latent variable $z$ -- simply generate samples $z_i \sim p(z|y)$ and transform $x_i = \mu_\theta(z_i)$.  These samples can then be used to perform advanced Bayesian analysis, such as uncertainty quantification and model selection \cite{durmus2018efficient} along with point estimate image reconstructions.
\subsubsection{Well-posedness of posterior distribution}

Although the posterior in the image space does not have a density with respect to the Lebesgue measure on $\mathbb{R}^d$, it is still possible to guarantee that the posterior distribution of $z|y$ and its $\mu_\theta$-pushforward in the image space $x|y$ are well-posed. A Bayesian inverse problem is well-posed if the posterior distribution exists, is unique, and depends continuously on the observation $y$, where continuity is defined with respect to a chosen metric on the space of probability measures.  Well-posedness in a particular metric can guarantee continuity in quantities of interest calculated from the posterior, thereby ensuring that Bayesian estimators and uncertainty quantification analyses are themselves well-posed in the sense of Hadamard.  

In this section we give conditions guaranteeing that the posterior distribution obtained using the data-driven prior is well-posed in the Prokhorov, total variation, Hellinger and Wasserstein distances.  These different notions of well-posedness are discussed in detail in \cite{latz2020well}, but the guarantees resulting from them can be summarised as follows.  Weak  well-posedness (well-posedness in the Prokhorov metric) implies continuity of posterior expectations of bounded, continuous quantities of interest.  This is particularly relevant when calculating posterior moments, such as the mean and variance of the distribution.  Total variation and Hellinger well-posedness extend this guarantee to any  bounded  quantity of interest, so are relevant when discussing statements of the form $\mathbb{P}_{x|y}(x \in R)$.  Finally, continuity in the Wasserstein distance has been used to prove convergence and stability of MCMC algorithms \cite{rudolf2018perturbation}.

If $z$ and $y$ are random variables and $p(z)$ denotes the prior distribution on $z$, then the following conditions on the likelihood function $p(y|z)$ are sufficient for the Bayesian inverse problem of obtaining the posterior probability distribution to be well-posed.
\begin{conditions} \label{conditions:wellposed}
	\ 
	\begin{itemize}
		\item[(C1)]  $p(\cdot|z)$ is a strictly positive probability density function,
		\item[(C2)]  $\int \abs{p(y|z)}\, p(z) \, dz < \infty$,
		\item[(C3)]  There exists a function $g(z)$ such that $\int \abs{g(z)}\, p(z) \, dz < \infty$, and $p(y|z) \leq g(z)$ for all $y$, and
		\item[(C4)]  $p(\cdot|z)$ is continuous.
		\item[(C5)]  $\int \norm{z}^p p(z) \, dz < \infty$ for some $p \in [1, \infty)$.
		\item[(C6)]  There exists $c \in (0, \infty)$ such that $p(y|z) \leq c$ for all $y$ and with probability one in $z$.
	\end{itemize}
\end{conditions}

\begin{theorem}[{\cite[Theorems 3.6, 3.12]{latz2020well}}]\label{theorem_wellposed}
	If $p(y|x)$ satisfies conditions (C1-C4) then the Bayesian inverse problem is weakly, Hellinger, and total variation well-posed.  If, additionally, conditions (C5) and (C6) are satisfied, then the problem is Wasserstein(p) well-posed.
\end{theorem}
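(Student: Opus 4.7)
The plan is to view Theorem \ref{theorem_wellposed} as the specialisation of Theorems 3.6 and 3.12 of \cite{latz2020well} to our generative-prior setting, and to sketch the structure of that argument. There are three ingredients: existence and uniqueness of the posterior, continuity in the total variation, Hellinger and Prokhorov metrics, and the stronger Wasserstein-$p$ continuity under (C5)--(C6).

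For existence, I would write the posterior as
\[
\pi_y(\diff z) \;=\; \frac{p(y|z)}{Z(y)}\, p(z)\, \diff z, \qquad Z(y) \;=\; \int p(y|z)\, p(z)\, \diff z,
\]
and observe that (C2) guarantees $Z(y) < \infty$ while (C1) guarantees $Z(y) > 0$, so $\pi_y$ is a well-defined probability measure for every admissible $y$. Uniqueness is immediate from Bayes' rule.

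For continuity in TV, Hellinger and Prokhorov, I would compare $\pi_y$ and $\pi_{y'}$ via the identity
\[
\frac{\diff \pi_y}{\diff p} - \frac{\diff \pi_{y'}}{\diff p} \;=\; \frac{p(y|\cdot)}{Z(y)} - \frac{p(y'|\cdot)}{Z(y')},
\]
and bound the TV and Hellinger distances by the triangle inequality in terms of $\int |p(y|z) - p(y'|z)|\, p(z)\, \diff z$ and $|Z(y) - Z(y')|$. The pointwise continuity from (C4) together with the integrable envelope from (C3) then allow two applications of dominated convergence: one to conclude $Z(y') \to Z(y)$, and one to conclude that the mixed integral vanishes as $y' \to y$. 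Hellinger continuity follows from the same envelope applied to the square-root integrand, and Prokhorov (weak) continuity is a corollary since TV dominates the weak topology on probability measures.

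For Wasserstein-$p$, the obstacle is that TV convergence alone does not control mass at infinity. To bridge this I would use (C6) to obtain a uniform bound $\diff\pi_y / \diff p \le c/Z(y)$, and then combine it with (C5) to get a uniform $L^p$ moment bound
\[
\int \norm{z}^p\, \pi_y(\diff z) \;\le\; \frac{c}{Z(y)} \int \norm{z}^p\, p(z)\, \diff z \;<\; \infty,
\]
locally uniformly in $y$. The standard argument then converts TV convergence plus uniform $L^p$ moments into $W_p$ convergence by truncating outside a large ball $\{\norm{z} \le R\}$ (on which one controls the coupling via TV) and estimating the complementary tail using the uniform moment bound, letting $R \to \infty$ after $y' \to y$.

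The main obstacle I expect is this last step: getting the correct interplay between the modulus of TV continuity and the moment bound to produce a genuine $W_p$ modulus. The abstract framework of \cite{latz2020well} packages this cleanly, but a self-contained proof would require care with the truncation parameters. Existence and the TV/Hellinger/Prokhorov part, by contrast, reduce to routine dominated-convergence arguments once (C1)--(C4) are in place.
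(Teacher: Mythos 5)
The paper does not actually prove this theorem: it is imported verbatim from \cite[Theorems 3.6, 3.12]{latz2020well}, so there is no internal argument to compare against. Your sketch is a correct reconstruction of the proof in that reference --- (C1)--(C2) give a finite, strictly positive normalising constant and hence a unique posterior, the envelope (C3) plus continuity (C4) feed dominated convergence to give TV/Hellinger continuity (with weak continuity as a corollary), and (C5)--(C6) supply the uniform moment control needed to upgrade TV to Wasserstein-$p$; the only cosmetic difference is that the cited proof passes through a weighted total-variation bound of the form $W_p(\mu,\nu)^p \lesssim \int \norm{z}^p \, \mathrm{d}\abs{\mu-\nu}(z)$ rather than your ball-truncation argument, and the two routes are interchangeable here since your bound $\mathrm{d}\pi_y/\mathrm{d}p \le c/Z(y)$ delivers uniform integrability of the $p$-th moments, not merely uniform boundedness.
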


\begin{remark}\label{example3p4}
	Let the measurement error in $y$ be modelled as additive Gaussian noise, 
	\[
	p(y|z) = \det (2\pi\Sigma)^{-1/2} \exp \left( -\frac{1}{2} \norm{\Sigma^{-1/2} (y - A \mu_\theta(z))}^2\right),
	\]
	where $A \in  \mathbb{C}^{n \times p}$ is a linear forward operator.  Then the Bayesian inverse problem is weakly, Hellinger, and total variation well-posed for any prior $p(z)$ defining a probability density. Furthermore, for any such prior satisfying (C5), the problem is Wasserstein(p) well-posed.  In particular, for the latent posterior model of Example \ref{ex:gaussianmodel}, the Bayesian inverse problem is well-posed.  A more general version of this theorem, extended to Radon-Nikodym derivatives, as outlined in \cite{latz2020well}, can be used to show that the problem in the image space, with prior given by the $\mu_\theta$-pushforward measure of the latent prior, is also well-posed.
\end{remark}

\subsubsection{Existence of posterior moments}

One of the most common ways to summarise the posterior distribution is through its moments.  While the previous section established conditions guaranteeing the well-posedness of bounded quantities, further work is required to establish that the moments of the distribution exist. In this section we show that all posterior moments exist for our model as long as the prior moments exist and $\mu_\theta$ is Lipschitz.  The condition on the former holds trivially when the prior is defined to be Gaussian, while when $\mu_\theta$ is defined using a neural network, the Lipschitz requirement simply states that the weights and biases must be finite and the activation functions Lipschitz.

\begin{proposition}\label{momentexistence}
	Assume $\mu_\theta: \mathbb{R}^m \to \mathbb{R}^d$ is L-Lipschitz, the likelihood $p(y|z)$ defines a probability density function, and the prior on the latent space has finite i\textsuperscript{th} moment $\mathbb{E}_p(\abs{z}^i)$ for $i=1,...,k$.  Then the posterior k\textsuperscript{th} moment $\mathbb{E}_\pi(\abs{x}^k)$ exists for almost all observations $y$.  That is,
	\[
	\mathbb{P}_y[E_\pi(\abs{\mu_\theta(z)}^k) < \infty] = 1.
	\]
\end{proposition}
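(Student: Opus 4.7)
The plan is to show that the (unnormalised) posterior expectation, viewed as a function of $y$, is integrable with respect to Lebesgue measure on the observation space, which then forces it to be finite almost everywhere and hence $\mathbb{P}_y$-almost surely.

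First, I would use the Lipschitz hypothesis to bound the integrand. Since $\mu_\theta$ is $L$-Lipschitz, we have $|\mu_\theta(z)| \leq |\mu_\theta(0)| + L\|z\|$, and by a standard inequality $(a+b)^k \leq 2^{k-1}(a^k + b^k)$ for $a,b \geq 0$, we get
\[
|\mu_\theta(z)|^k \leq C_k \bigl(1 + \|z\|^k\bigr), \qquad C_k = 2^{k-1}\max\{|\mu_\theta(0)|^k, L^k\}.
\]
Combined with the assumed finiteness of $\mathbb{E}_p(\|z\|^i)$ for $i \leq k$ (and trivially $i=0$), this yields $\int |\mu_\theta(z)|^k \, p(z)\, dz < \infty$.

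Next, writing the posterior as $\pi(z|y) = p(y|z)p(z)/p(y)$ with normalising constant $p(y) = \int p(y|z)p(z)\,dz$, I would study the product $p(y) \cdot \mathbb{E}_\pi[|\mu_\theta(z)|^k]$, which equals the unnormalised integral $\int |\mu_\theta(z)|^k p(y|z) p(z)\, dz$. Integrating this over $y$ and applying Tonelli's theorem (the integrand is nonnegative),
\[
\int \!\!\int |\mu_\theta(z)|^k p(y|z) p(z)\, dz\, dy = \int |\mu_\theta(z)|^k p(z) \left[\int p(y|z)\, dy\right] dz = \int |\mu_\theta(z)|^k p(z)\, dz,
\]
where the inner $y$-integral equals $1$ because $p(\cdot|z)$ is a probability density by assumption. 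By the previous step this is finite.

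Since a nonnegative function with finite Lebesgue integral is finite Lebesgue-almost everywhere, the set $N = \{y : p(y)\,\mathbb{E}_\pi[|\mu_\theta(z)|^k] = \infty\}$ has Lebesgue measure zero. On $\{p(y) > 0\}$ this gives $\mathbb{E}_\pi[|\mu_\theta(z)|^k] < \infty$, while $\mathbb{P}_y(\{p(y)=0\}) = 0$ since $\mathbb{P}_y$ has density $p(y)$. Hence $\mathbb{P}_y(N) = 0$, proving the claim. The only mildly delicate point is distinguishing Lebesgue-a.e. from $\mathbb{P}_y$-a.s., but absolute continuity of $\mathbb{P}_y$ with respect to Lebesgue closes the gap; everything else is a direct Tonelli/Lipschitz bookkeeping exercise.
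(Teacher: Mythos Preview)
Your proof is correct and follows essentially the same route as the paper: a Lipschitz bound reduces the question to prior moments of $z$, and a Fubini/Tonelli swap together with $\int p(y|z)\,dy = 1$ shows the relevant integral is finite, forcing almost-sure finiteness. The paper phrases the Fubini step as the tower identity $\mathbb{E}_y[\mathbb{E}_\pi(|z|^i)] = \mathbb{E}_p(|z|^i)$, which integrates against $\mathbb{P}_y$ directly and therefore avoids your small detour through Lebesgue-a.e.\ finiteness plus absolute continuity of $\mathbb{P}_y$; but this is a cosmetic difference, not a substantive one.
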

\begin{proof}
	By Lipschitz continuity $\abs{\mu_\theta(z)} \leq \abs{\mu_\theta(0)} + L \norm{z}$, so it suffices to show $\mathbb{P}[E_\pi(\abs{z}^i) < \infty] = 1$ for all $i=1,...,k$. This can be concluded immediately from Fubini's Theorem, since $\mathbb{E} [\mathbb{E}_\pi(\abs{z}^i) ] = \mathbb{E}_p(\abs{z}^i) < \infty $. 
\end{proof}

Together with \Cref{theorem_wellposed} and \Cref{example3p4}, \Cref{momentexistence} states that important Bayesian estimators such as the posterior mean are not only guaranteed to exist, but in fact are guaranteed to be well-posed estimators of $x$.

\subsection{Markov Chain Monte Carlo} \label{subsec:mcmc}

The posterior density (\ref{eq:latentpost}) is known only up to proportionality.  To calculate expectations, Monte Carlo integration must be used.  Typically in imaging applications, the dimensionality of the problem makes efficient sampling difficult unless the posterior (and hence prior) is required to be log-concave. In lower dimensions, this restriction is unnecessary, and posteriors which are not log-concave can be sampled efficiently.   For this, we propose using the preconditioned Crank--Nicolson algorithm \cite{cotter2013mcmc}, which provides a robust baseline for sampling such problems and scales well with increasing dimensionality, being well-defined in infinite dimensions.  Alternative methods such as importance sampling are ill-suited to this problem as, even given the dimensionality reduction, we expect the highly concentrated, multi-modal nature of the latent space to be beyond their scope.  Another option would be to use a gradient-based sampler such as Langevin or Hamiltonian Monte Carlo, but these are not straightforward to apply in this case.  A particular problem is that the maximum step size permitted is limited by the Lipschitz constant of the gradient of the potential.  For a posterior defined by a neural network, this Lipschitz constant may be difficult to calculate and, worse, the potential may not be gradient Lipschitz at all, as would be the case for any network architecture including ReLU nonlinearities.  This would rule out many of the most successful generative architectures \cite{karras2019style, brock2018large}.  While an exponential linear unit can be used as a differentiable alternative, the Lipschitz constant may still be prohibitively large, requiring small step sizes which prevent efficient exploration of the sample space. We hence take a gradient-free approach to avoid this issue.

\begin{algorithm}[t]
	\caption{Preconditioned Crank--Nicolson}
	\begin{algorithmic}
		\STATE \textbf{Input}: Initial value $Z_0$.
		\FOR{$n \geq 1$}
		\STATE Proposal: $P = \sqrt{1-\beta^2} Z_{n-1} + \beta \xi_n$ where $\xi_n \sim \mathcal{N}(0,C_0)$.
		\STATE Acceptance Probability: $\alpha = \min(1,\exp(\Phi(P) - \Phi(Z_{n-1})))$.
		\STATE Set $Z_n = \begin{cases} P &\mbox{with probability} \  \alpha, \\
		Z_{n-1} &\mbox{otherwise.}
		\end{cases}
		$
		\ENDFOR
		\STATE \textbf{Output}: $\{ Z_n: n \in \mathbb{N} \}$ containing samples from posterior.
	\end{algorithmic}
	\label{pCN}
\end{algorithm}

The choice taken here, to use the preconditioned Crank--Nicolson algorithm \cite{cotter2013mcmc} (pCN),\footnote{The authors are currently investigating a theoretically rigorous variant of the preconditioned Crank--Nicolson algorithm using gradient information by combining explicitly stabilised integration schemes \cite{SKROCK2020} with a control of the Lipschitz constant of the network \cite{ryu2019plug, miyato2018spectral}.} is made possible by choosing the latent distribution $p(z)$ to be Gaussian.  Given a Hilbert space $\mathcal{Z}$, pCN generates a Markov Chain $(Z_n)_{n \in \mathbb{N}}$ with invariant measure $\nu$ of the form:
\[
\nu(E) = \frac{1}{K} \int_E \exp ( - \Phi(z) )\, \nu_0(\diff x),
\]
for each $E \in \mathcal{Z}$, where $\nu_0= \mathcal{N}(0, C_0)$, $K$ is a normalising constant and $\Phi:\mathcal{Z} \to \mathbb{R}$.

The key idea is that, when targeting a measure which is a reweighting of a Gaussian, the performance of standard Metropolis Hastings algorithms can be greatly improved using an Ornstein--Uhlenbeck proposal $Z' = \sqrt{1-\beta^2} Z_{n-1} + \beta \xi_n$ (rather than the symmetric proposal of Random Walk Metropolis (RWM), for example).  To satisfy detailed balance it is necessary to accept with probability $\alpha = \min(1,\exp(\Phi(Z') - \Phi(Z_{n-1})))$.  For RWM and conventional gradient-based proposals,  the acceptance probability tends to zero as dimensionality increases.  In contrast, the acceptance probability and convergence properties of pCN are robust to increasing dimensionality, in the sense that the method is well posed in function spaces \cite{hairer2014spectral}.

\begin{algorithm}[t]
	\caption{Parallel Tempered Preconditioned Crank--Nicolson}
	\begin{algorithmic}
		\STATE \textbf{Input}: Initial values $Z_0^{(i)}$ for  $i = 1,...,k$.
		\STATE \textbf{Input}: Swap proposal frequency $F \in \mathbb{N}$.
		\STATE \textbf{Input}: chain temperatures $0\leq T_0 < T_1 < ... < T_k = 1$.
		\STATE j = 0
		\FOR{$n \geq 1$}
		    \FOR{$i = 1,...,k$}
    		\STATE Proposal: $P^{(i)} = \sqrt{1-\beta^2} Z_{n-1}^{(i)} + \beta \xi_n$ where $\xi_n \sim \mathcal{N}(0,C_0)$.
    		\STATE Acceptance Probability: $\alpha_0 = \min (1,\exp \{T_i [\Phi(P^{(i)}) - \Phi(Z_{n-1}^{(i)})]\} )$
    		\STATE Set $Z_n^{(i)} = \begin{cases} P^{(i)} &\mbox{with probability} \  \alpha_0, \\
    		Z_{n-1} &\mbox{otherwise.}
    		\end{cases}
    		$
    		\ENDFOR
    		\IF{$n \mod F = 0$}{
    		    \STATE Propose temperature swap between chains $j$ and $(j+1)$. (Algorithm \ref{algo:swaps})
    		    \STATE $j = j + 1 \mod k$
    		}
    		\ENDIF
    	\ENDFOR
    	\STATE \textbf{Output}: Chains $\{Z_n^{(i)} : i=1,...,k, n \in \mathbb{N}\}$.
    	\STATE Chain $\{Z_n^{(k)}: n \in \mathbb{N}\}$ contains samples from posterior.
	\end{algorithmic}
	\label{ParallelpCN}
\end{algorithm}
\begin{algorithm}[h!]
	\caption{Temperature Swaps}
	\begin{algorithmic}
		\STATE \textbf{Input}: States $Z^{(i)}$ and $Z^{(j)}$ of chains at temperatures $T_i$ and $T_j$ respectively.
		\STATE Swap proposal: $S^{(i)} = Z^{(j)}, \ S^{(j)} = Z^{(i)}$.
		\STATE Acceptance Probability: $\alpha_1 = \min (1,\exp \{ \Phi(S^{(i)}) - \Phi(S^{(j)})(T_i - T_{j}) \})$.
		\STATE Set $(Z^{(i)}, Z^{(j)}) = \begin{cases} (S^{(i)}, S^{(j)}) &\mbox{with probability} \  \min(1, \alpha_1), \\
			(Z^{(i)}, Z^{(j)}) &\mbox{otherwise.}
		\end{cases}
		$
		\STATE \textbf{Output}: New states $Z^{(i)}$ and $Z^{(j)}$  for chains at temperatures $T_i$ and $T_j$.
	\end{algorithmic}
	\label{algo:swaps}
\end{algorithm}
\begin{figure}[h!]
	\centering
	\begin{subfigure}{.4\textwidth}
		\centering
		\includegraphics[width=.9\textwidth]{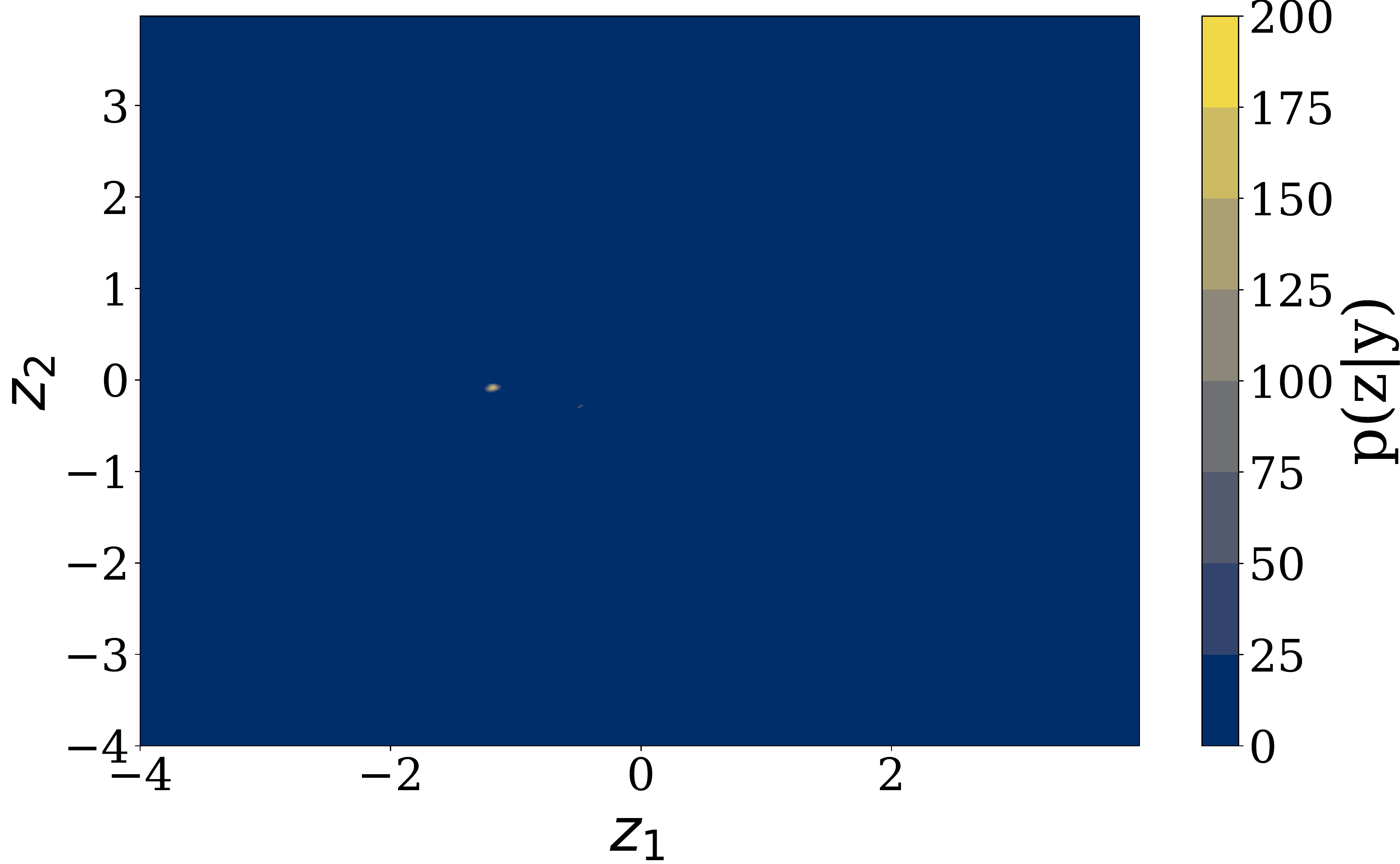}
		\caption{$p(z|y)$}
		\label{subfig:concentratedposterior}
	\end{subfigure}\\
	\begin{subfigure}{.4\textwidth}
		\centering
		\includegraphics[width=.9\textwidth]{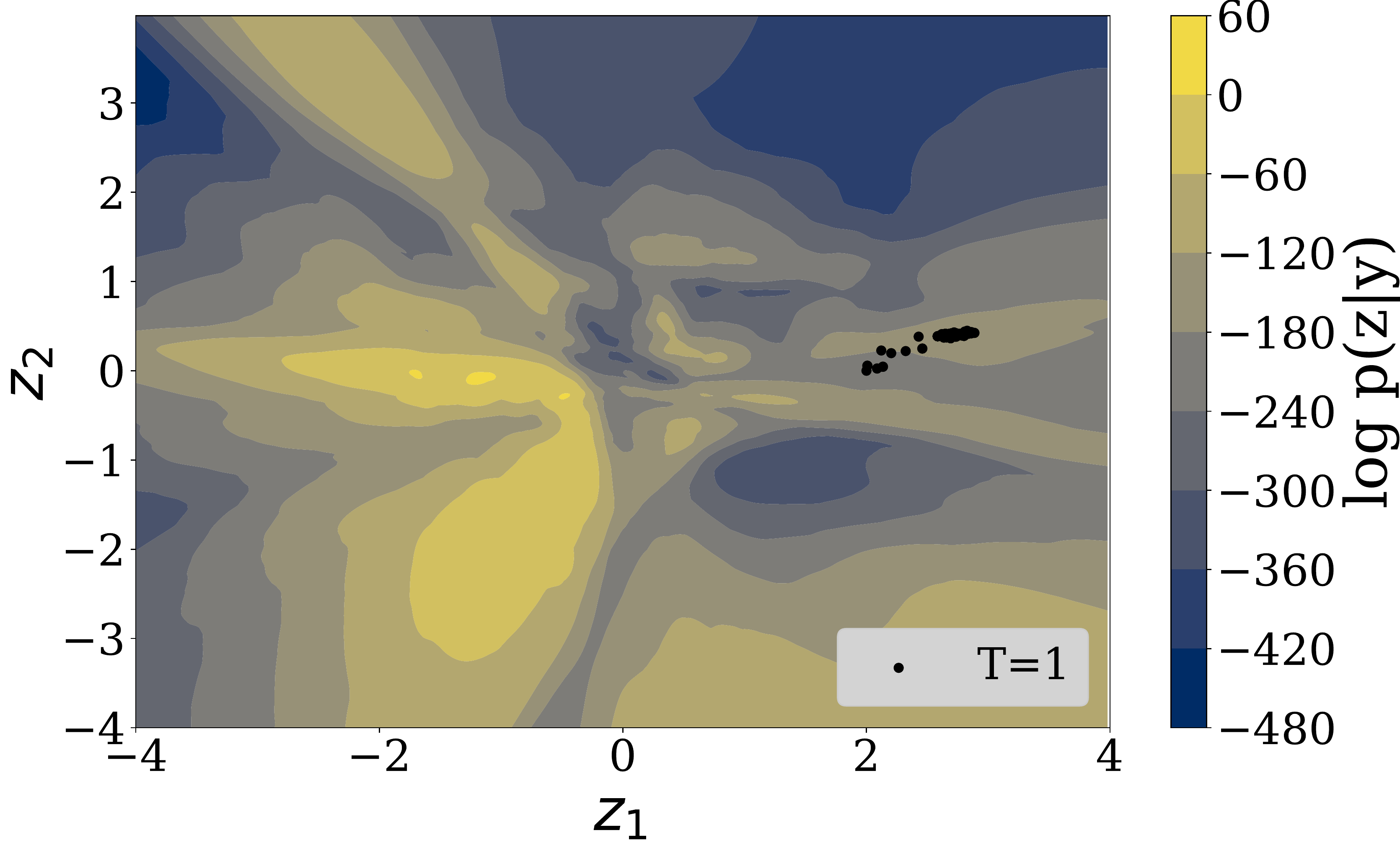}
		\includegraphics[width=.9\textwidth]{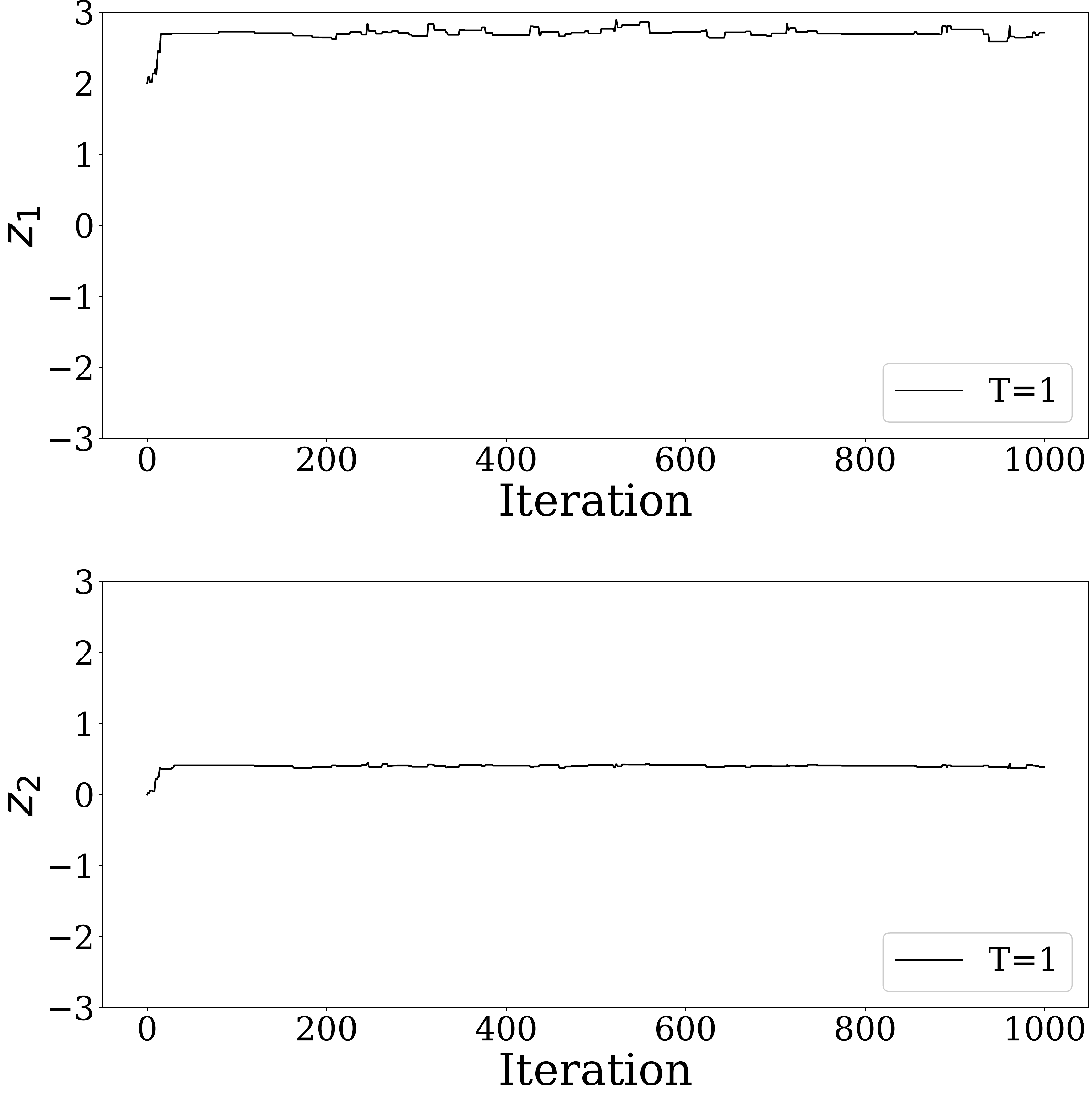}
		\label{subfig:stuck}
		\caption{Single Chain}
	\end{subfigure}
	\begin{subfigure}{.4\textwidth}
		\includegraphics[width=.9\textwidth]{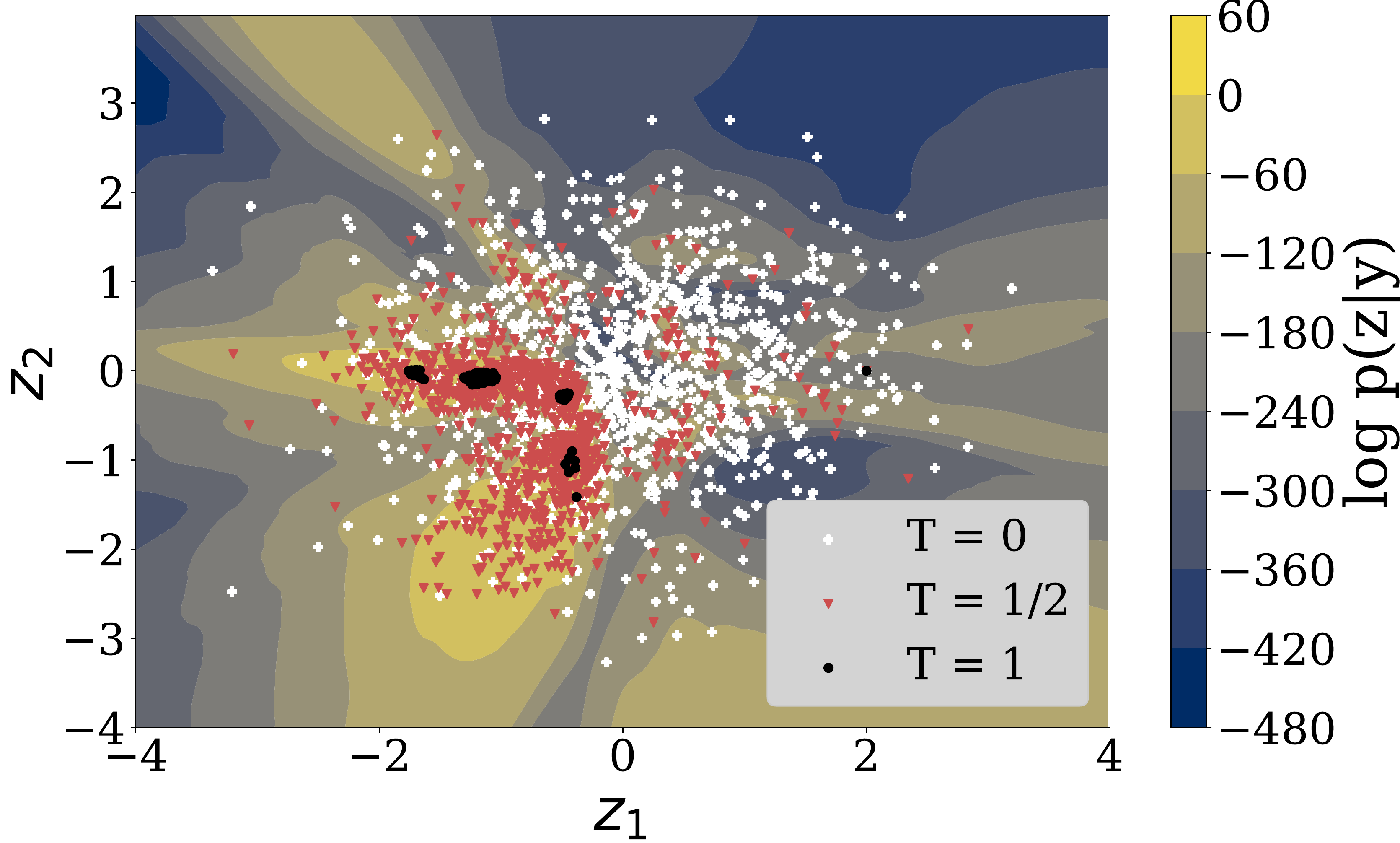}
		\includegraphics[width=.9\textwidth]{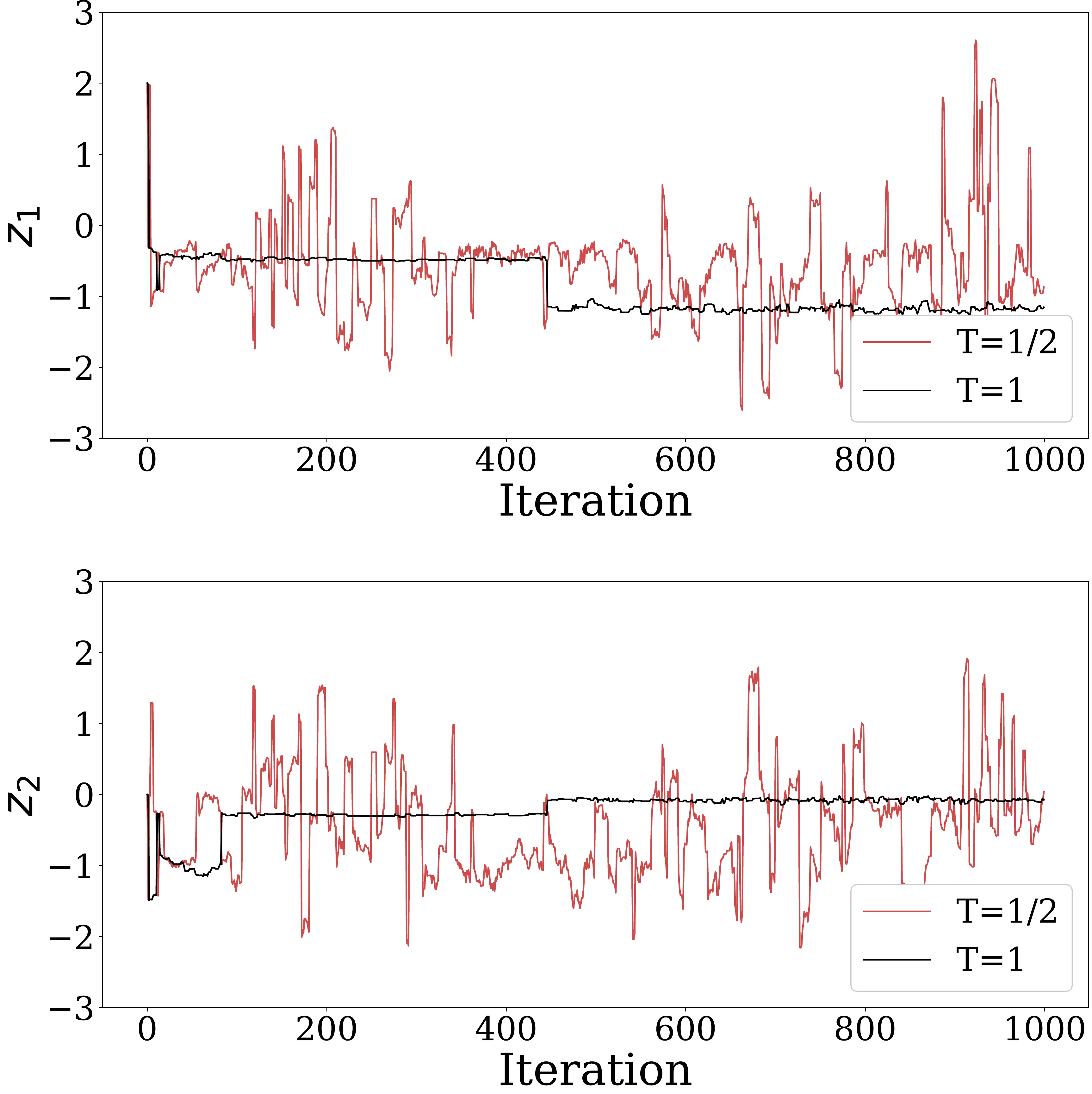}
		\label{subfig:unstuck}
		\caption{Parallel Tempering}
	\end{subfigure}
	\caption{Parallel Tempering: (a) The posterior $p(z|y)$ is highly concentrated, with the near-entirety of the mass located around $(-1.2,0)$.  It is also highly multi-modal, as can be seen when plotted on a log-scale (below).  (b) A single chain initialised in a low probability region (in this case at the point $(2,0)$) can become stuck.  The chain does not reach the region of highest probability around $(-1.2,0)$ after $10^3$ iterations.  (c) Using parallel tempering, from the same initial state, the chains with more energy can escape local maxima.  By swapping states, the chain sampling the true posterior ($T=1$) can reach the region of high probability.}
	\label{fig:paralleltempering}
\end{figure}

Since we expect the nonlinearity of the learned transformation $\mu_\theta$ to result in a multimodal distribution we combine the pCN algorithm with a parallel tempering scheme \cite{geyer1991markov}, running multiple chains in parallel sampling potentials at different `temperatures' $T_i \in [0,1]$:
\[
p(z|y, T_i) = p(y|x_i)^{T_i} p(x).
\]
 Swaps of the states between chains at different temperatures are proposed, and accepted according to a Metropolis rule, enabling transitions between isolated local maxima separated by areas of low probability.  To illustrate this, we performed an experiment using pCN to sample from a highly multimodal 2D posterior, depicted in Figure \ref{fig:paralleltempering}.  After $10^3$ iterations, a single chain initialised in an area of low probability (at the point $z=(2,0)$) is trapped in a local maximum, and has not reached the area of high probability.  In contrast, using 3 chains in a parallel tempering scheme with temperatures $T_1=0$, $T_2=1/2$ and $T_3=1$, enables the chain sampling the true posterior ($T=1$) to escape local maxima, and explore the area of high probability (around $z=(-1.2,0)$).  Examining the trace plots, the chain at $T=1$ appears to exhibit metastability within the first 50 iterations when sampling around $z=(-1/2,-1/4)$, but is able to escape the local maximum and reach another mode of the distribution.

\subsubsection{Ergodicity of Markov Chain}
By construction, the Markov chain generated by the pCN method with parallel tempering admits $p(z|y)$ as invariant density. However, to guarantee convergence to $p(z|y)$ and ensure that the samples generated can be used for Monte Carlo estimates of expectation, we also need to check that the Markov chain is ergodic. For this to hold, it suffices to show that the following conditions are satisfied \cite[Assumptions 6.1 and Theorem 6.2]{cotter2013mcmc}.
\begin{conditions}
The function $\Phi:\mathcal{Z} \to \mathbb{R}$ satisfies the following:
\begin{enumerate}
    \item There exists $p>0$, $K>0$ such that for all $z \in \mathcal{Z}$,
    \[
    0 \leq \Phi(z;y) \leq K ( 1 + \norm{z}^p).
    \]
    \item For every $r > 0$ there is $K(r)>0$ such that for all $z$, $z' \in \mathcal{Z}$ with\\ $\max\{\norm{z}, \norm{z'} \} < r$,
    \[
    \abs{\Phi(z) - \Phi(z')} \leq K(r) \norm{z-z'}.
    \]
\end{enumerate}
\end{conditions}
\begin{remark}
These conditions hold for the Gaussian noise model \ref{eq:gaussianpost}, where $\Phi$ is given by the potential of the likelihood
\[
\Phi(z;y) = -\frac{(y - A \mu_\theta(z))^\dagger (y - A \mu_\theta(z))}{2 \sigma_y^2},
\]
so long as $\mu_\theta$ is Lipschitz continuous, and $\mu_\theta(0)$ is finite.
\begin{proof}
Assume $\mu_\theta$ is $L$-Lipschitz.  Then,
\begin{align*}
    \Phi(z;y) &= \frac{\norm{y-A \mu_\theta(z)}^2}{2\sigma^2} \leq \frac{1}{\sigma^2} ( \norm{y}^2 +  \norm{A \mu_\theta(z)}^2), \\
    &\leq \frac{1}{\sigma^2} \max \{ \norm{y}, L\norm{A} \}^2 (1+\norm{z}^2).
\end{align*}
To prove the second condition holds, we first use the Lipschitz property to show that if $\max \{\norm{z}, \norm{z'} \} < r$, then $\norm{\mu(z) + \mu(z')} \leq \mu_\theta(0) + L \norm{z} + L \norm{z'} < \mu_\theta(0) + L r$.  We then obtain the desired form, assuming $\mu_\theta(0)$ is finite. 
\begin{align*}
    \abs{\Phi(z)-\Phi(z')} &= \frac{1}{2 \sigma^2} \abs{ \norm{y-A \mu_\theta(z)}^2 - \norm{y-A \mu_\theta(z')}^2},\\
    &=  \frac{1}{2 \sigma^2} \abs{ 2 y^\dagger (A(\mu_\theta(z) - \mu_\theta(z')) + \norm{A\mu_\theta(z)}^2 - \norm{A \mu_\theta(z')}^2},\\
    &\leq \frac{1}{2 \sigma^2} \left(2 L \abs{y} \norm{A} \norm{z-z'} + \norm{A}^2 \abs{\mu_\theta(z) - \mu_\theta(z')} \abs{\mu_\theta(z) + \mu_\theta(z')} \right),\\
    & \leq \frac{L \norm{A}}{\sigma^2} \left(\abs{y} + \norm{A} (\abs{\mu_\theta(0)} + Lr) \right)\norm{z-z'}.
\end{align*}
\end{proof}
As stated previously, the conditions of Lipschitz continuity and finiteness are both satisfied when $\mu_\theta$ is a neural network with standard activation functions.
\end{remark}

To conclude, we have shown that a data-driven prior distribution defined by a generative neural network defines a unique posterior distribution in the latent space which is stable with respect to perturbations in the observation and has finite moments, under easily-satisfiable conditions.  We have further established that this posterior distribution may be sampled using a Markov Chain Monte Carlo algorithm which is robust to the multi-modality of the model.

\section{Numerical Experiments} \label{sec:num}

In this section we illustrate the proposed method with an application to the MNIST dataset of $28\times28$ pixel images of handwritten digits \cite{lecun1998mnist}. The pixel values are normalised in the range $[0,1]$  We first discuss the VAE architecture, and in particular the choice of latent dimensionality, before turning to three imaging inverse
problems -- \emph{denoising, inpainting and deconvolution}.  Such problems are traditionally solved by MAP estimation in the variational formulation. Here we demonstrate the benefits of our Bayesian sampling approach, which can be used to obtain point estimates, as well as to visualise uncertainty, and perform  advanced Bayesian analyses beyond the scope of optimisation-based mathematical imaging methodologies.

\subsection{VAE Training}
\subsubsection{Architecture}
The prior takes the form of a Variational Autoencoder with a standard normal marginal latent distribution $p(z) = p_\mathcal{N}(z;0,\mathbb{I}_m)$.  The VAE is made up of two fully connected networks, each with two hidden layers and RELU activation functions. The encoder is a fully connected [784 --- 512 --- 512 --- $2m$] network, where $m$ denotes the latent dimensionality and $784=28\times28$ is the size of the input image.  The output layer has dimension $2m$, returning the mean, $\mu_\phi(x)$, and variance, $\sigma_\phi(x)$, of the encoded distribution $q_\phi(z|x)$. The decoder is also fully connected with the architecture [$m$ -- 512 -- 512 -- 784].  In section \ref{subsec:latentDim} we show how $m$ can be chosen using a method to determine the intrinsic dimensionality of the data.  The networks are trained using Adam \cite{kingma2014adam} with default parameters, with a minibatch size of 128, over 50 epochs.  Training, on a Intel(R) Xeon(R) CPU @ 2.30GHz and Tesla P100 16GB GPU, lasted only 112.2 seconds.
 
\subsubsection{Determining latent dimensionality} \label{subsec:latentDim}

\begin{figure}[t]
	\centering
	\begin{subfigure}{.48\textwidth}
		\centering
		\includegraphics[width=\textwidth]{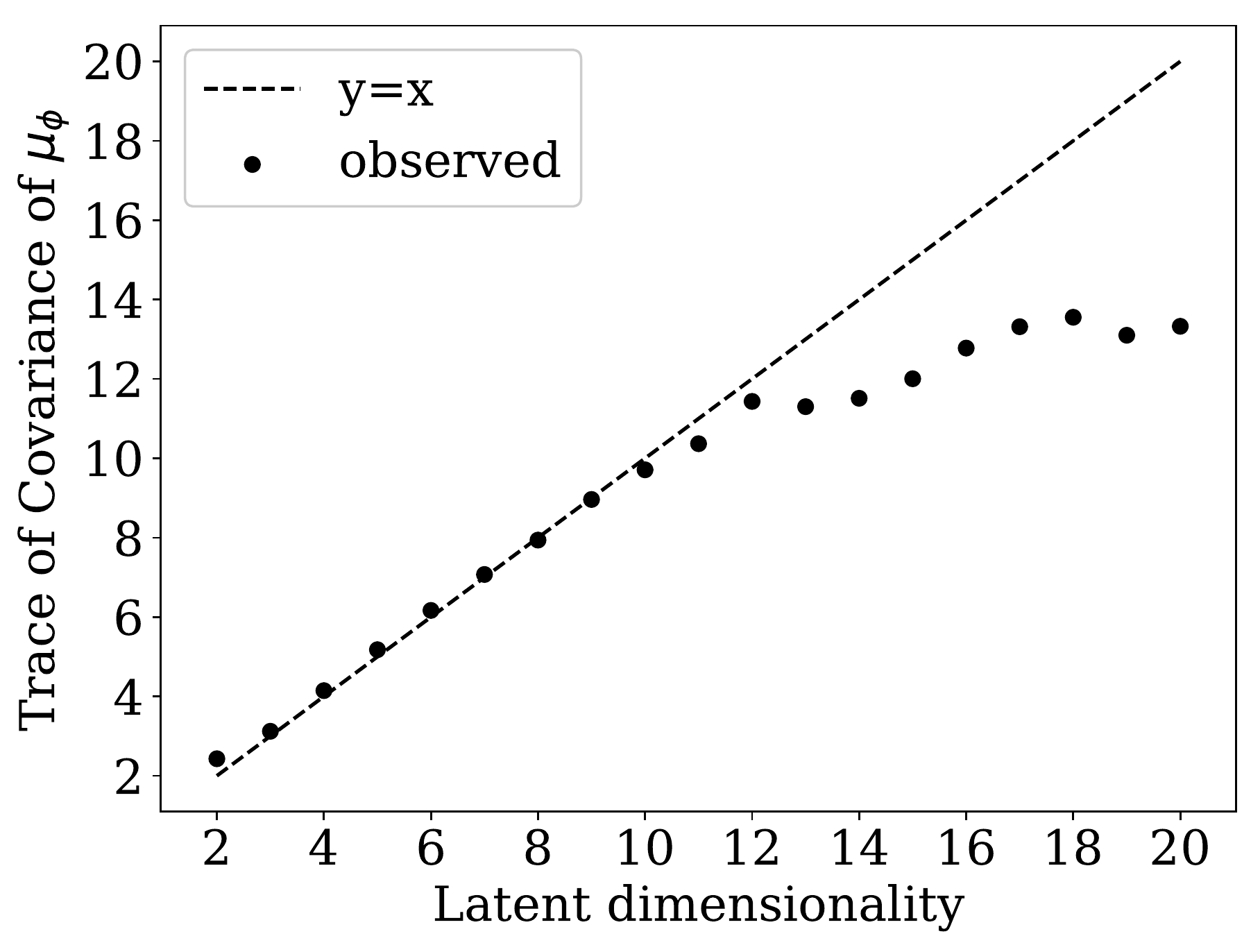}
		\caption{}
		\label{subfig:covariance}
	\end{subfigure}
	\begin{subfigure}{.48\textwidth}
		\centering
		\includegraphics[width=\textwidth]{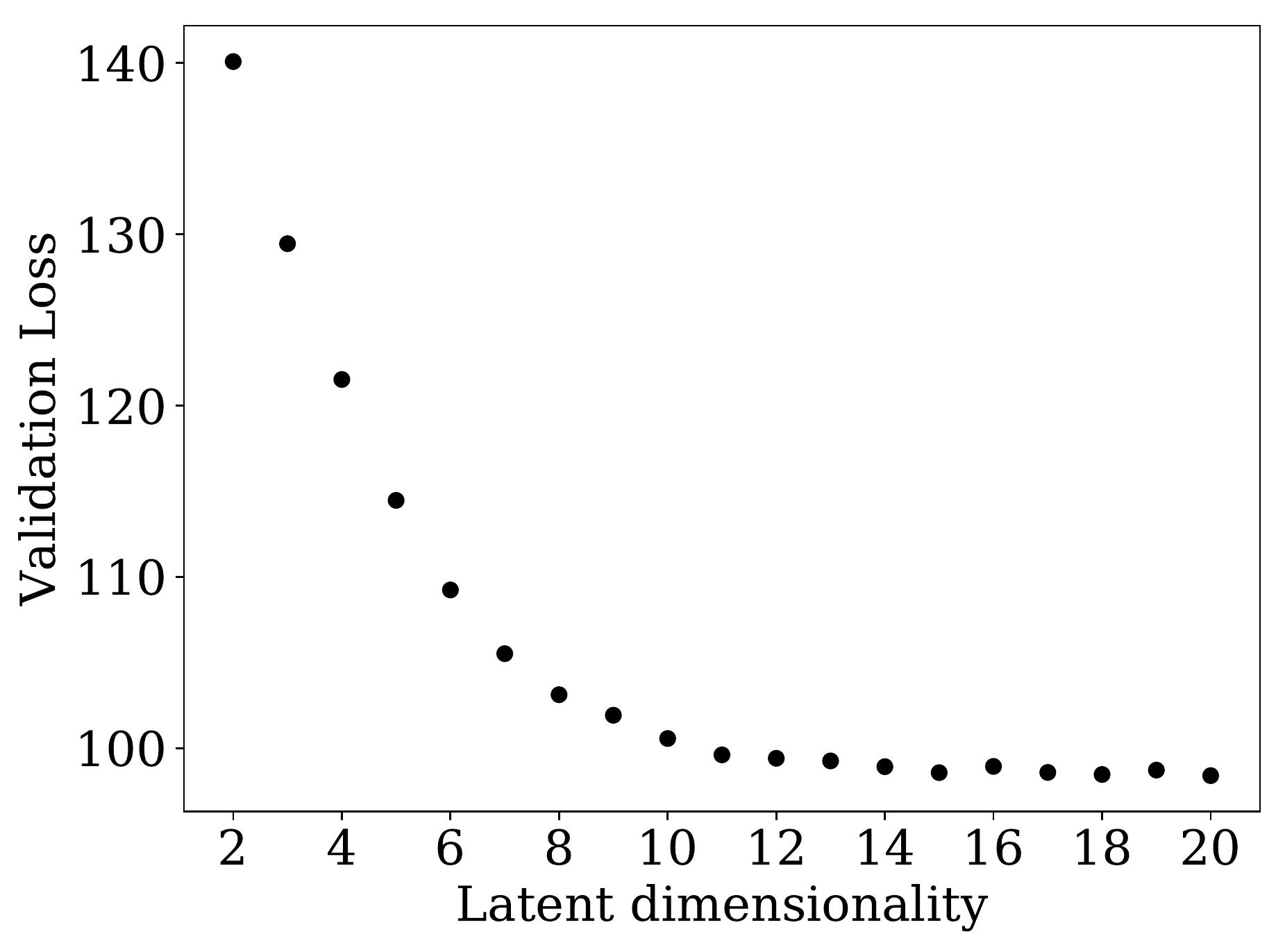}
		\caption{}
		\label{subfig:loss}
	\end{subfigure}	
	\caption{Plots against dimension of (a) trace of sample covariance of $\mu_\phi(x_i)$ across all test images (b) validation loss.  For dimensions greater than 12, the validation loss does not significantly decrease. The  encoded means for dimensionality greater than 12 suggest additional dimensionality is unnecessary, since the means of the encoded distribution in additional dimensions have low variance across all images in the testing set.}
	\label{fig:choosing dim}
\end{figure}

One important choice in the architecture of the variational autoencoder is the dimensionality of the latent space.  To accurately approximate the true distribution, the latent space should have dimensionality at least equal to the intrinsic dimensionality of the data.  On the other hand, overestimating the dimensionality results in less regularisation (and a more computationally challenging sampling problem).  We here present a simple empirical check which can help identify the dimensionality of the latent space.  This relies on the encoder of the VAE, and illustrates one benefit of this approach over alternative generative models.

The dimensionality check proceeds as follows. Applying the encoder to a test set of images $\{x_i\}_{i=1}^n$ returns a corresponding set of encoded means and variances which define the distributions $q_\phi(z|x_i) = p_\mathcal{N}(z;\mu_\phi(x_i), \sigma_\phi^2(x_i) \mathbb{I}_m)$. For a VAE that accurately captures the manifold supporting $\{x_i\}_{i=1}^n$, we expect that the means $\{\mu_\phi(x_i)\}_{i=1}^n$ will carry most of the information and dominate the variances $\{\sigma_\phi^2(x_i)\}_{i=1}^n$, which are related to the uncertainty of the encoding. This can be assessed by calculating the trace of the sample covariance of $\{\mu_\phi(x_i)\}_{i=1}^n$, which for an accurately fitted model should be close to $\textrm{trace}(\mathbb{I}_m)=m$, as that is the target latent covariance for the VAE. If the trace of the sample covariance is less than the latent dimensionality, this indicates redundant dimensionality; i.e. the dimensionality of the latent space exceeds the intrinsic dimensionality of the manifold supporting $\{x_i\}_{i=1}^n$. In the case of MNIST, we observe in Figure \ref{subfig:covariance} this effect for dimensions greater than 12, and therefore choose a VAE with a 12-dimensional latent space. Similarly, Figure \ref{subfig:loss} illustrates that increasing the dimensionality above 12 provides little benefit in validation loss. 

\subsection{Practical Implementation Guidelines for pCN}
To optimise computational efficiency, both the step size in the pCN algorithm and the number of chains in the parallel tempering scheme can be tuned to aim for an acceptance rate, $a$, in the region of 0.2 to 0.5. We use 25\% as default, following the optimal scaling rule derived for the Random Walk Metropolis algorithm \cite{roberts1997weak} and the similar result obtained for parallel tempered MCMC \cite{atchade2011towards}.  We use automatic step size selection according to a version of the Robbins--Monro algorithm \cite{robbins1951stochastic} adapted for the constraint $\beta \in (0,1)$,  as follows:
\begin{align*}
	\delta_n &= \log \frac{\beta_n}{1 - \beta_n},\\
	\delta_{n+1} &= \delta_n + c_n(\alpha(z',z_{n-1})) - a),\\
	\beta_{n+1} &= \frac{1}{1+e^{-\delta_{n+1}}},
\end{align*}
where $\{c_n\}_{n=1}^\infty$ is a sequence of positive real numbers such that $\sum_{n=0}^\infty c_n = \infty$ and $\sum_{n=0}^\infty c_n^2 < \infty$.  For the experiments in this paper, $c_n = c/n$ for constant $c$.

\subsection{Point Estimates} \label{subsec:ptEstimates}

To obtain point estimates based on the posterior \ref{eq:latentpost}, we use the minimum mean square error (MMSE) estimator
\[
\hat{x} = \argmin_{x'} \int \abs{\mu_\theta(z) - x'}^2  p(z|y)\, dz,
\]
which is equal to the mean of the posterior distribution.  This estimator is approximated using posterior samples by the Monte Carlo estimator
\[
x_{MMSE} = \frac{1}{n} \sum_{i=1}^n \mu_\theta(z_i),
\]
here $\{z_i\}_{i=1}^N$ are samples from the latent posterior with density $p(z|y)$.  We note that it would also be possible to define an estimator by $\mu_\theta(\hat{z})$, where $\hat{z} =  \frac{1}{n} \sum_{i=1}^n z_i$, but this would force the estimates to lie in the range of the function $\mu_\theta$. Since the quantity of interest is the image $x$, it is more natural to calculate the mean in the pixel space, rather than the latent space. 

\def\arraystretch{0.4}
\begin{figure}[t]
	\centering
	\begin{minipage}{\textwidth}
		\centering
		\begin{tabular}[t]{l@{\hskip0pt} @{\hskip0pt}c @{\hskip0pt}c @{\hskip0pt}c  c @{\hskip0pt}c @{\hskip0pt}c}
			\setlength\tabcolsep{1 pt}
			True Image &
			&
			\includegraphics[width=1.5cm]{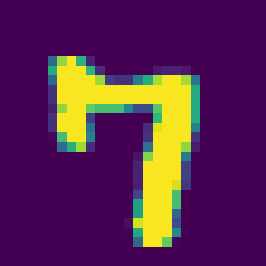}&
			&
			&
			\includegraphics[width=1.5cm]{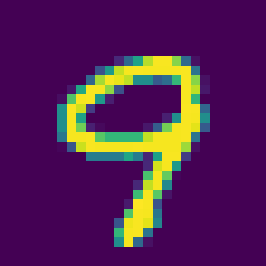}&
			\\
			Observation &
			\includegraphics[width=1.5cm]{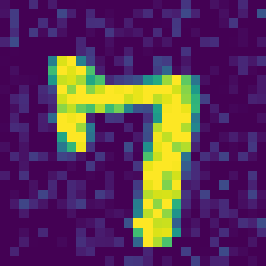}&
			\includegraphics[width=1.5cm]{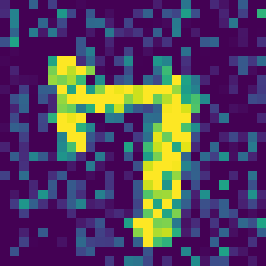}&
			\includegraphics[width=1.5cm]{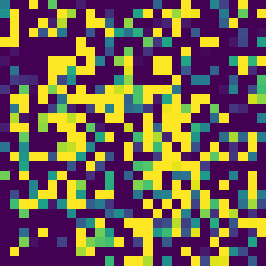}&
			\includegraphics[width=1.5cm]{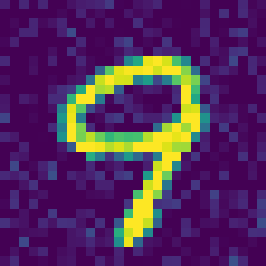}&
			\includegraphics[width=1.5cm]{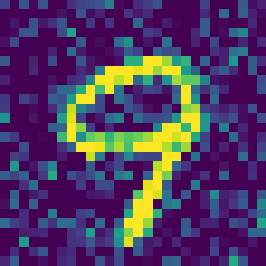}&
			\includegraphics[width=1.5cm]{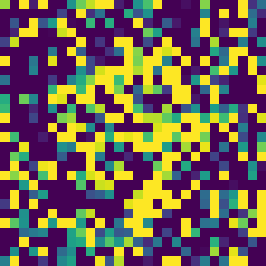}\\
			& \scriptsize $\sigma=0.1$ & \scriptsize $\sigma=0.25$ & \scriptsize $\sigma=1.0$ &  \scriptsize $\sigma=0.1$ & \scriptsize $\sigma=0.25$ & \scriptsize $\sigma=1.0$\\
			DAE \cite{vincent2010stacked} &
			\includegraphics[width=1.5cm]{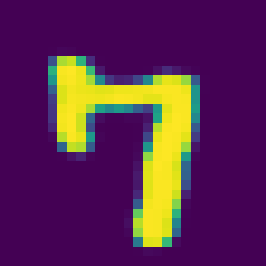}&
			\includegraphics[width=1.5cm]{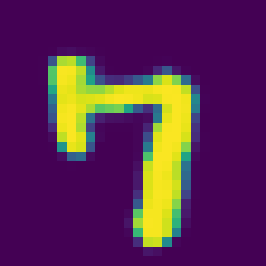}&
			\includegraphics[width=1.5cm]{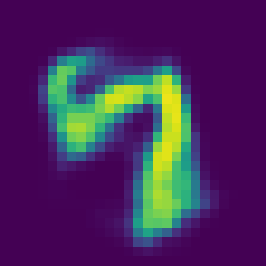}&
			\includegraphics[width=1.5cm]{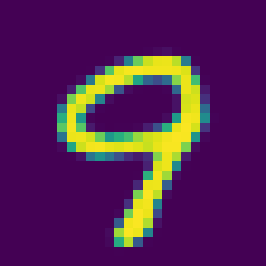}&
			\includegraphics[width=1.5cm]{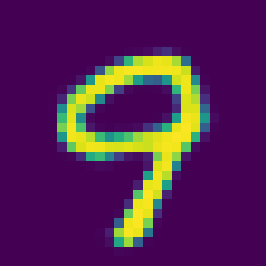}&
			\includegraphics[width=1.5cm]{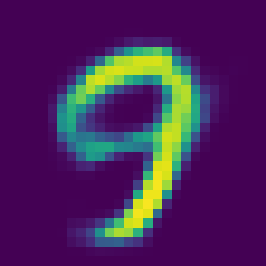}\\
			& \scriptsize 27.81 dB & \scriptsize 25.57 dB & \scriptsize 17.22 dB & \scriptsize 27.52 dB & \scriptsize 23.00 dB & \scriptsize 14.85 dB \\
			JPMAP \cite{gonzalez2019solving} &
			\includegraphics[width=1.5cm]{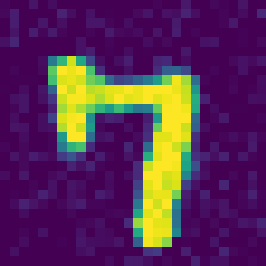}&
			\includegraphics[width=1.5cm]{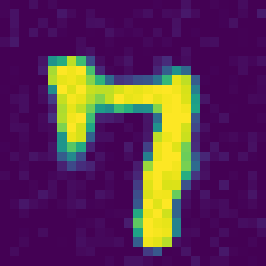}&
			\includegraphics[width=1.5cm]{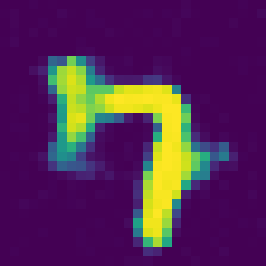}&
			\includegraphics[width=1.5cm]{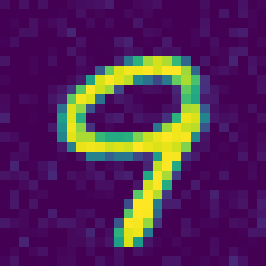}&
			\includegraphics[width=1.5cm]{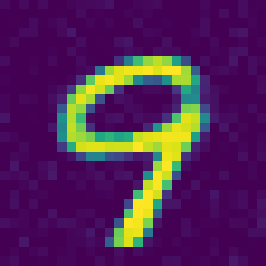}&
			\includegraphics[width=1.5cm]{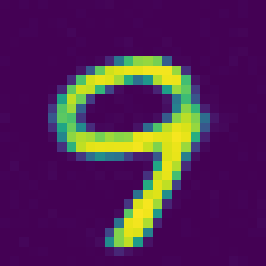}\\
			& \scriptsize 25.06 dB & \scriptsize 24.16 dB & \scriptsize 16.87 dB & \scriptsize 23.69 dB & \scriptsize 21.50 dB & \scriptsize 15.88 dB\\
			MMSE (Ours) &
			\includegraphics[width=1.5cm]{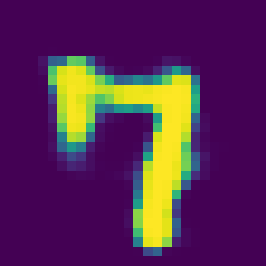}&
			\includegraphics[width=1.5cm]{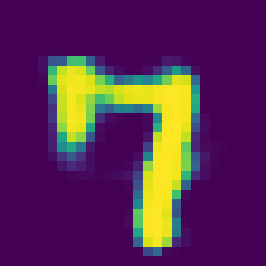}&
			\includegraphics[width=1.5cm]{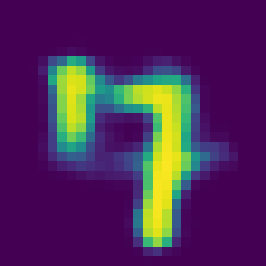}&
			\includegraphics[width=1.5cm]{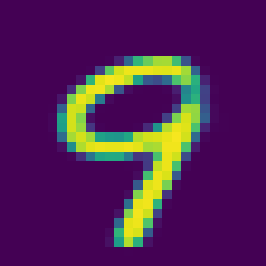}&
			\includegraphics[width=1.5cm]{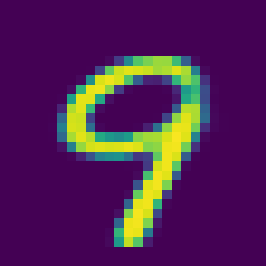}&
			\includegraphics[width=1.5cm]{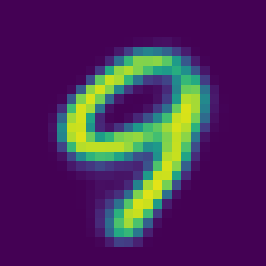}\\
			& \scriptsize 24.99 dB & \scriptsize 23.99 dB & \scriptsize 17.58 dB & \scriptsize 21.94 dB & \scriptsize 21.85 dB & \scriptsize 18.74 dB	
		\end{tabular}
	\end{minipage}\hfill
	\caption{Denoising: From left to right, s.d. of noise, $\sigma=0.1,\  0.25,\  1.0$. PSNR reported below reconstructions.}
	\label{fig:denoising}
\end{figure}

It should be noted, however, that under this model we cannot obtain a maximum a posteriori probability (MAP) estimate in the image space, since the image posterior does not admit a density w.r.t. the appropriate Lebesgue measure. A MAP estimate of the latent variable could be obtained, but this highlights a well known drawback of the MAP estimator -- it is not invariant to reparametrisation \cite{druilhet2007invariant}.  If $z'$ is the mode of the latent posterior, there is no corresponding Bayesian interpretation for the generated image  $\mu_\theta(z')$. We hence only report results on the MMSE estimator, which has a strong decision-theoretic justification as a Bayesian estimator \cite{robert2007bayesian}.  

In the following, we benchmark the results obtained by the point estimate defined by the MMSE estimator against two different approaches.  The first is a plug-and-play (PnP) ADMM algorithm \cite{venkatakrishnan2013plug}, using as denoiser a denoising autoencoder \cite{vincent2010stacked} trained on MNIST with Gaussian noise of standard deviation between $10^{-2}$ and 1.  The second is the more closely-related joint posterior maximum a posteriori probability (JPMAP) method \cite{gonzalez2019solving}, based on a generative prior using the same VAE.  Both of these methods are qualitatively different from our own.  The JPMAP estimator \cite{gonzalez2019solving} finds the MAP of the joint posterior distribution $p(z,x | y)$ defined by a stochastic VAE, while the plug-and-play ADMM approach \cite{venkatakrishnan2013plug} can also be seen as a generalisation of a variational problem, where the proximal of the ADMM algorithm (which can be viewed as a specific denoising operator) is replaced by a denoising operator defined by a neural network.

Figures \ref{fig:denoising}, \ref{fig:inpainting} and \ref{fig:deblurring} illustrate the performance of each method on a variety of images and inverse problems, exhibiting the different behaviours of the algorithms.  Alongside these results, we report the mean PSNR obtained by our method over the entire test dataset of $10^4$ images (Table \ref{PSNRtable}).  Since hyperparameter tuning was required for both alternative methods to achieve optimal results, it was not possible to perform this exhaustive test on the alternative methods.  In each experiment, we use 10 parallel chains, with an initial step size of 0.1, and initialize at zero.  The initial $2\times 10^4$ samples are discarded as burn-in, to ensure the Markov Chain has reached stationarity.

\paragraph{Denoising}

\def\arraystretch{0.4}
\begin{figure}[t]
	\centering
	\begin{minipage}{\textwidth}
		\centering
		\begin{tabular}{l@{\hskip0pt} @{\hskip0pt}c @{\hskip0pt}c @{\hskip0pt}c  c @{\hskip0pt}c @{\hskip0pt}c}
			\setlength\tabcolsep{1 pt}
			True Image &
			&
			\includegraphics[width=1.5cm]{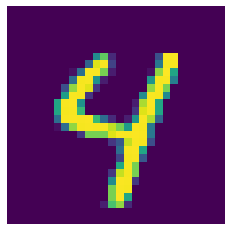}&
			&
			&
			\includegraphics[width=1.5cm]{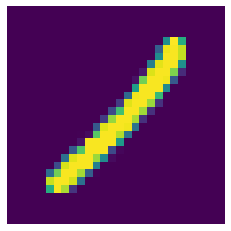}&
			\\
			Observation &
			\includegraphics[width=1.5cm]{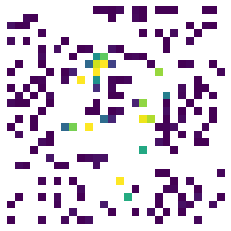}&
			\includegraphics[width=1.5cm]{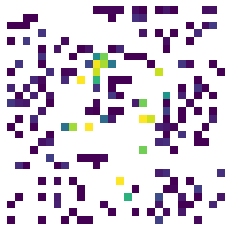}&
			\includegraphics[width=1.5cm]{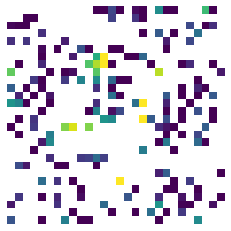}&
			\includegraphics[width=1.5cm]{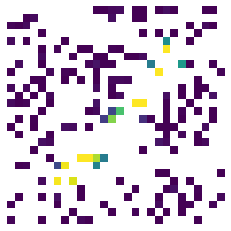}&
			\includegraphics[width=1.5cm]{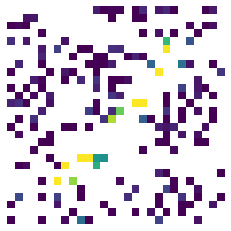}&
			\includegraphics[width=1.5cm]{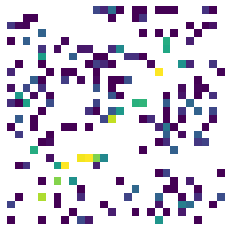}\\
			& \scriptsize $\sigma=0.025$ & \scriptsize $\sigma=0.1$ & \scriptsize $\sigma=0.25$ &  \scriptsize $\sigma=0.025$ & \scriptsize $\sigma=0.1$ & \scriptsize $\sigma=0.25$\\
			PnP ADMM \cite{venkatakrishnan2013plug} &
			\includegraphics[width=1.5cm]{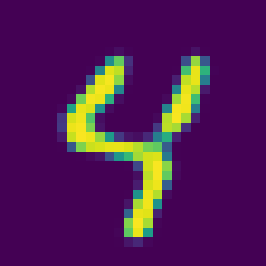}&
			\includegraphics[width=1.5cm]{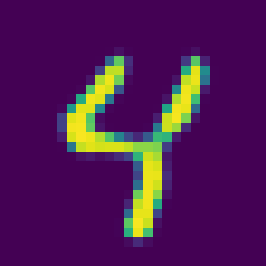}&
			\includegraphics[width=1.5cm]{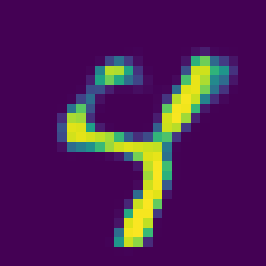}&
			\includegraphics[width=1.5cm]{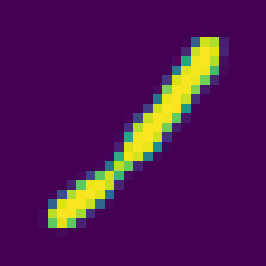}&
			\includegraphics[width=1.5cm]{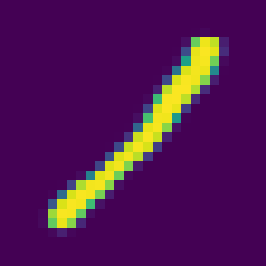}&
			\includegraphics[width=1.5cm]{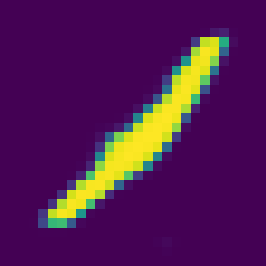}\\
			& \scriptsize 19.90 dB & \scriptsize 18.68 dB & \scriptsize 17.58 dB & \scriptsize 22.19 dB & \scriptsize 22.54 dB & \scriptsize 20.80 dB \\
			JPMAP \cite{gonzalez2019solving} &
			\includegraphics[width=1.5cm]{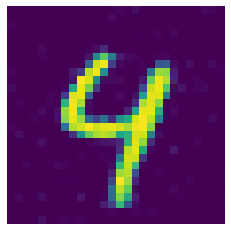}&
			\includegraphics[width=1.5cm]{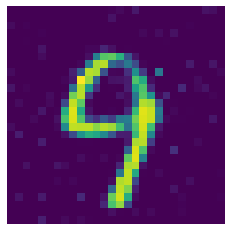}&
			\includegraphics[width=1.5cm]{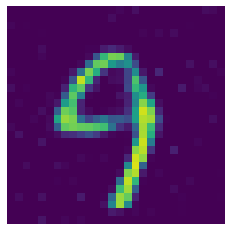}&
			\includegraphics[width=1.5cm]{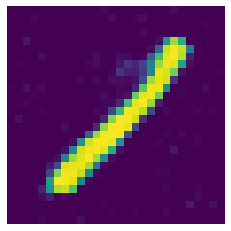}&
			\includegraphics[width=1.5cm]{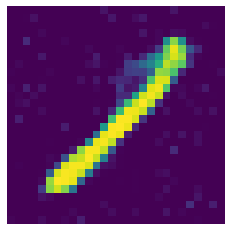}&
			\includegraphics[width=1.5cm]{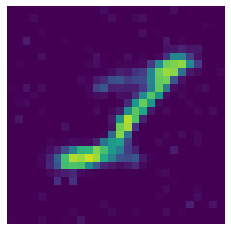}\\
			& \scriptsize 19.01 dB & \scriptsize 14.85 dB & \scriptsize 14.79 dB & \scriptsize 24.26 dB & \scriptsize 22.20 dB & \scriptsize 14.22 dB\\
			MMSE (Ours) &
			\includegraphics[width=1.5cm]{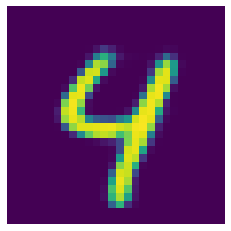}&
			\includegraphics[width=1.5cm]{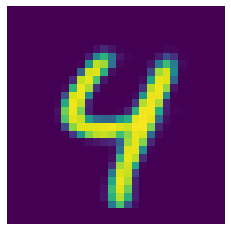}&
			\includegraphics[width=1.5cm]{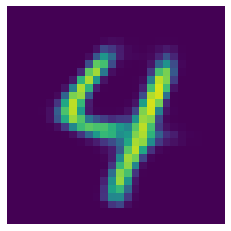}&
			\includegraphics[width=1.5cm]{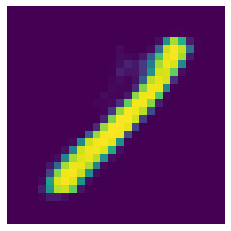}&
			\includegraphics[width=1.5cm]{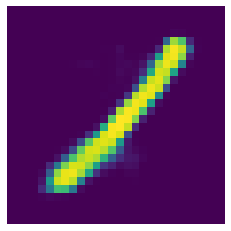}&
			\includegraphics[width=1.5cm]{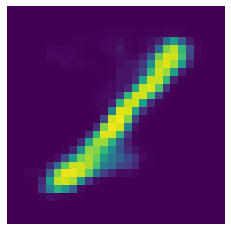}\\
			& \scriptsize 21.36 dB & \scriptsize 20.47 dB & \scriptsize 18.54 dB & \scriptsize 24.88 dB & \scriptsize 24.39 dB & \scriptsize 20.08 dB	
		\end{tabular}
	\end{minipage}\hfill
	\caption{Inpainting with 25\% known pixels. From left to right, s.d. of noise, $\sigma=0.01,\  0.025,\  0.1$. PSNR reported below reconstructions.}
	\label{fig:inpainting}
\end{figure}

We first consider the problem of Gaussian denoising
\[
y = x + w \quad \text{where} \quad w \sim N(0, \sigma^2 \mathbb{I}_p).
\]

For denoising, the MMSE estimator is calculated from $5 \times 10^4$ samples, (Figure \ref{fig:denoising}).  Since the task is denoising, for which the autoencoder was trained, rather than comparing to the PnP ADMM method, we display the results for the denoising autoencoder itself.  For the easiest case, of removing low-variance noise, the methods all perform well, and obtain accurate reconstructions. As expected, having been trained on the problem, the end-to-end denoiser attains the best results. However, for more aggressive noise levels ($\sigma=1.0$, the upper limit of the training range), the performance of the end-to-end denoiser deteriorates. The proposed MMSE estimator, on the other hand, is significantly more robust to the more challenging levels of noise, highlighting the strength of the Bayesian framework in problems with higher uncertainty. The results obtained by JPMAP are more robust to high noise than the end-to-end network, but for low noise they retain some of the noise in the observation, a result of the use of the stochastic decoder with an unstructured covariance matrix.

\paragraph{Inpainting}
Inpainting is defined by the forward model
\[
y = M x + w \quad \text{where} \quad w \sim N(0, \sigma^2 \mathbb{I}_p),
\]
where $M$ is a diagonal masking matrix, taking value 1 where pixels are observed, and 0 for missing pixels.  For our experiments, we observe 25\% of pixels at random.  The MMSE estimator is calculated from $10^5$ samples (Figure \ref{fig:inpainting}).  We observe that the proposed Bayesian method delivers accurate estimation results in all cases, and often significantly outperforms the alternative approaches. JPMAP exhibits a very competitive performance for low-noise problems, whereas PnP ADMM is more competitive in problems with higher noise levels.

\def\arraystretch{0.4}
\begin{figure}[t]
	\centering
	\begin{minipage}{\textwidth}
		\centering
		\begin{tabular}{l@{\hskip0pt} @{\hskip0pt}c @{\hskip0pt}c @{\hskip0pt}c  c @{\hskip0pt}c @{\hskip0pt}c}
			\setlength\tabcolsep{1 pt}
			True Image &
			&
			\includegraphics[width=1.5cm]{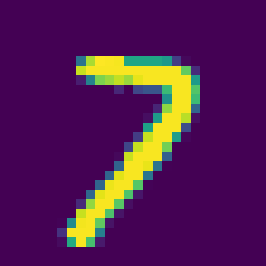}&
			&
			&
			\includegraphics[width=1.5cm]{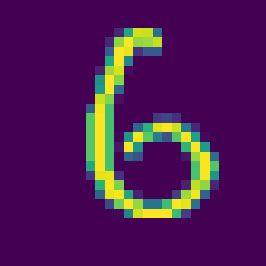}&
			\\
			Observation &
			\includegraphics[width=1.5cm]{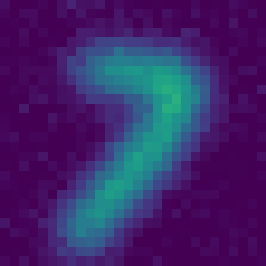}&
			\includegraphics[width=1.5cm]{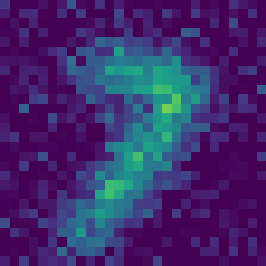}&
			\includegraphics[width=1.5cm]{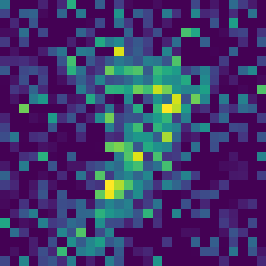}&
			\includegraphics[width=1.5cm]{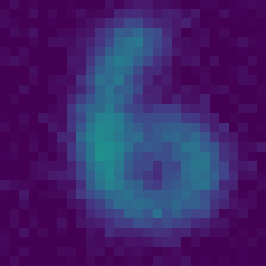}&
			\includegraphics[width=1.5cm]{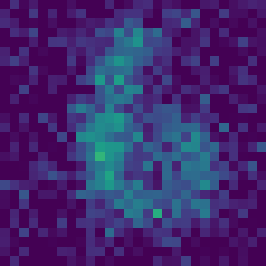}&
			\includegraphics[width=1.5cm]{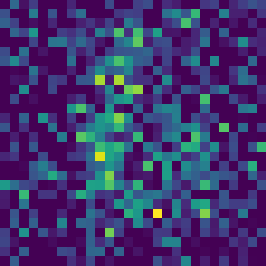}\\
			& \scriptsize $\sigma=0.025$ & \scriptsize $\sigma=0.1$ & \scriptsize $\sigma=0.25$ &  \scriptsize $\sigma=0.025$ & \scriptsize $\sigma=0.1$ & \scriptsize $\sigma=0.25$\\
			PnP ADMM \cite{venkatakrishnan2013plug} &
			\includegraphics[width=1.5cm]{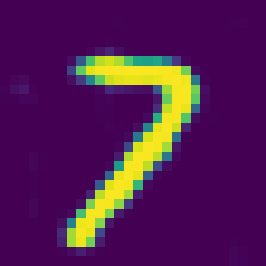}&
			\includegraphics[width=1.5cm]{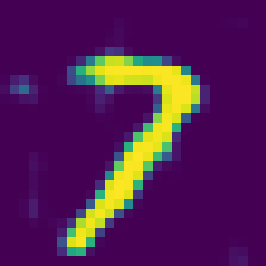}&
			\includegraphics[width=1.5cm]{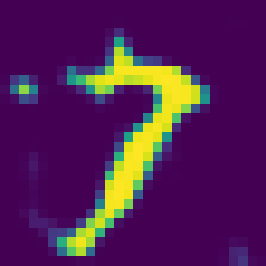}&
			\includegraphics[width=1.5cm]{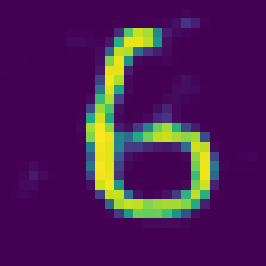}&
			\includegraphics[width=1.5cm]{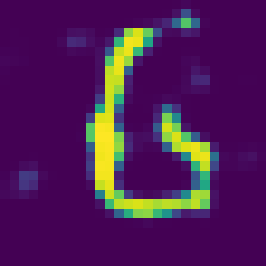}&
			\includegraphics[width=1.5cm]{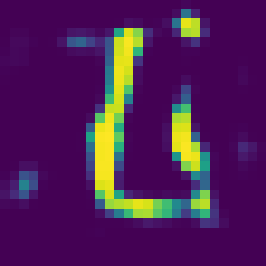}\\
			& \scriptsize 28.58 dB & \scriptsize 20.73 dB & \scriptsize 14.47 dB & \scriptsize 21.28 dB & \scriptsize 16.08 dB & \scriptsize 12.97 dB \\
			JPMAP \cite{gonzalez2019solving} &
			\includegraphics[width=1.5cm]{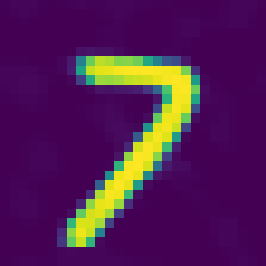}&
			\includegraphics[width=1.5cm]{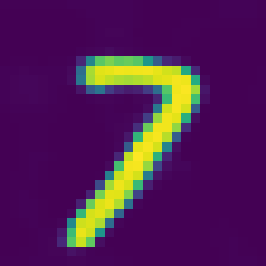}&
			\includegraphics[width=1.5cm]{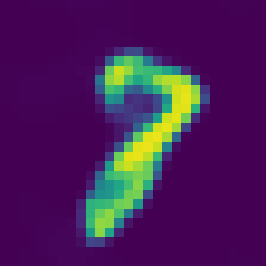}&
			\includegraphics[width=1.5cm]{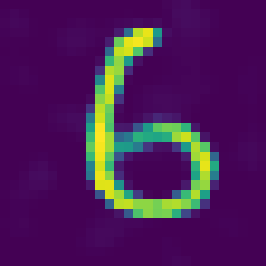}&
			\includegraphics[width=1.5cm]{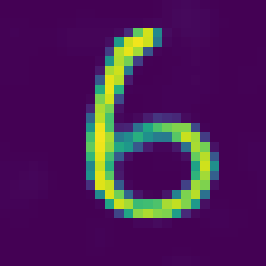}&
			\includegraphics[width=1.5cm]{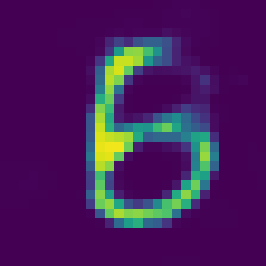}\\
			& \scriptsize 27.81 dB & \scriptsize 23.17 dB & \scriptsize 15.20 dB & \scriptsize 22.82 dB & \scriptsize 21.46 dB & \scriptsize 14.92 dB\\
			MMSE (Ours) &
			\includegraphics[width=1.5cm]{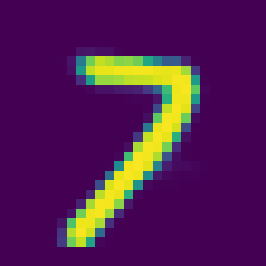}&
			\includegraphics[width=1.5cm]{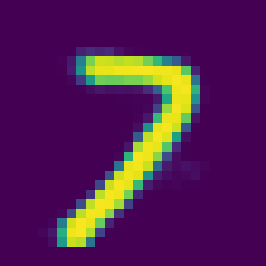}&
			\includegraphics[width=1.5cm]{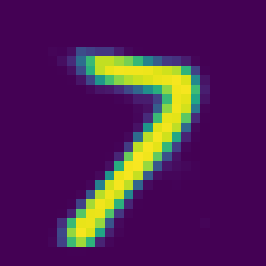}&
			\includegraphics[width=1.5cm]{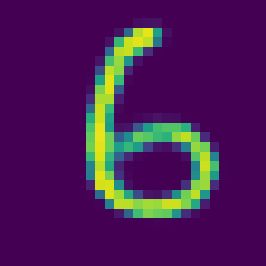}&
			\includegraphics[width=1.5cm]{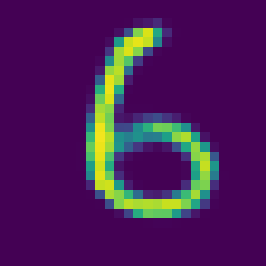}&
			\includegraphics[width=1.5cm]{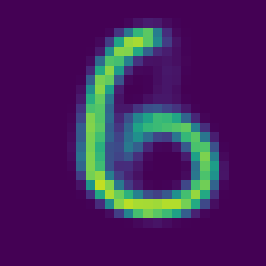}\\
			& \scriptsize 28.07 dB & \scriptsize 25.08 dB & \scriptsize 24.41 dB & \scriptsize 22.64 dB & \scriptsize 21.21 dB & \scriptsize 17.57 dB	
		\end{tabular}
	\end{minipage}\hfill
	\caption{Deblurring $6 \times 6$ Gaussian blur with standard deviation of 2. From left to right, s.d. of noise, $\sigma=0.01,\  0.025,\  0.1,\  0.25$. PSNR reported below reconstructions.}
	\label{fig:deblurring}
\end{figure}

\paragraph{Deblurring}
The third problem considered is deblurring, defined by the forward model
\[
y_i = G x + w \quad \text{where} \quad w \sim N(0, \sigma^2 \mathbb{I}_p),
\]
where $G$ denotes convolution with a $6\times6$ Gaussian blur kernel with a bandwidth of $2$ pixels.  The ill-posedness of deconvolution requires a smaller step size, and thus more samples to fully explore the posterior.  In this case the MMSE estimator is calculated from $2 \times 10^5$ samples (Figure \ref{fig:deblurring}).  Once again, we obtain consistently accurate results, and outperform the optimisation-based methods in the most challenging problems with high noise levels.

\begin{table}[h!]
	\centering
	\def\arraystretch{1.0}
	\begin{tabular}{ | c ||  c | c |  c | c | }
		\hline
		$\sigma$ & 0.1 & 0.25 & 1.0 & 2.5\\ \hline
		Denoising & $22.53 \pm 3.35$ & $21.58 \pm 2.79$ & $17.17 \pm 2.28$ & $13.29 \pm 1.51$ \\ \hline
	\end{tabular}\\
	\hfill\\
	\hfill\\
	\begin{tabular}{ | c ||  c | c |  c | c |  }
		\hline
		$\sigma$ & 0.01 & 0.025 & 0.1 & 0.25 \\ \hline
		Inpainting & $19.90 \pm 2.96$ & $20.00 \pm 2.90$ & $19.96 \pm 2.83$ & $19.08 \pm 2.23$ \\ \hline
		Deblurring & $21.48 \pm 3.36$ & $21.40 \pm 3.20$ & $20.17 \pm  2.49$ & $17.64 \pm 1.99$ \\ \hline
	\end{tabular}
	\caption{PSNR of MMSE estimator for $10^4$ test images for varying noise levels.}
	\label{PSNRtable}
\end{table}

\subsection{Visualising Uncertainty}

A major benefit of the Bayesian framework is its ability to capture the uncertainty in the delivered solutions.  To visualise this uncertainty, we use the generated samples in order to perform a principal component analysis (PCA) of the posterior covariance matrix. This then allows identifying the leading eigenvectors of the covariance and visualising the uncertainty as projected on two-dimensional subspaces.  In detail, projecting from the latent space to these two-dimensional subspaces, images generated by the sampling algorithm can then be plotted according to their location in the 2D space defined by the principal components to create a map illustrating the uncertainty in the posterior samples. Figure \ref{pca0} below depicts the uncertainty associated with the two leading eigenvectors of the posterior covariance for an image deblurring problem with $\sigma = 0.5$.

\begin{figure}[h!]
	\centering
		\begin{subfigure}{.2\textwidth}
			\centering
			\includegraphics[width=\textwidth]{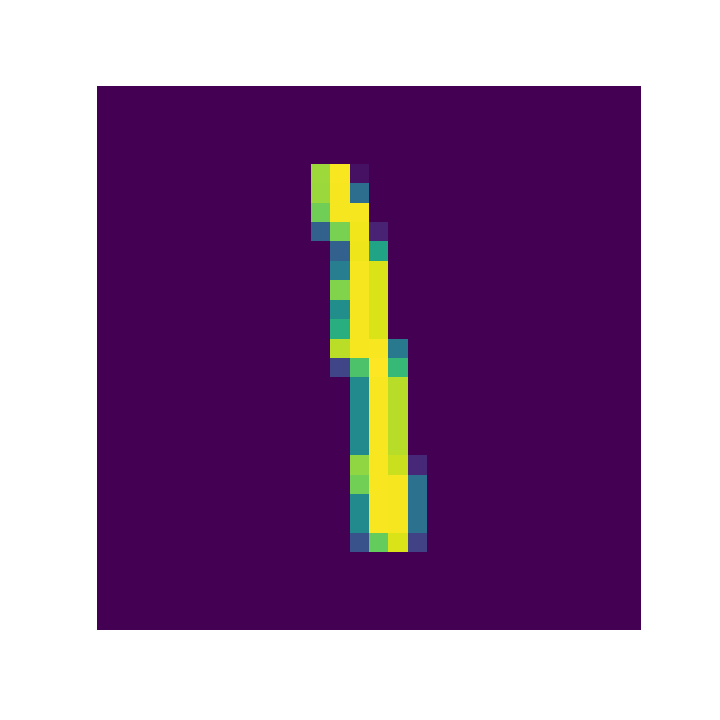}
			\caption*{True}
		\end{subfigure}
		\begin{subfigure}{.2\textwidth}
			\centering
			\includegraphics[width=\textwidth]{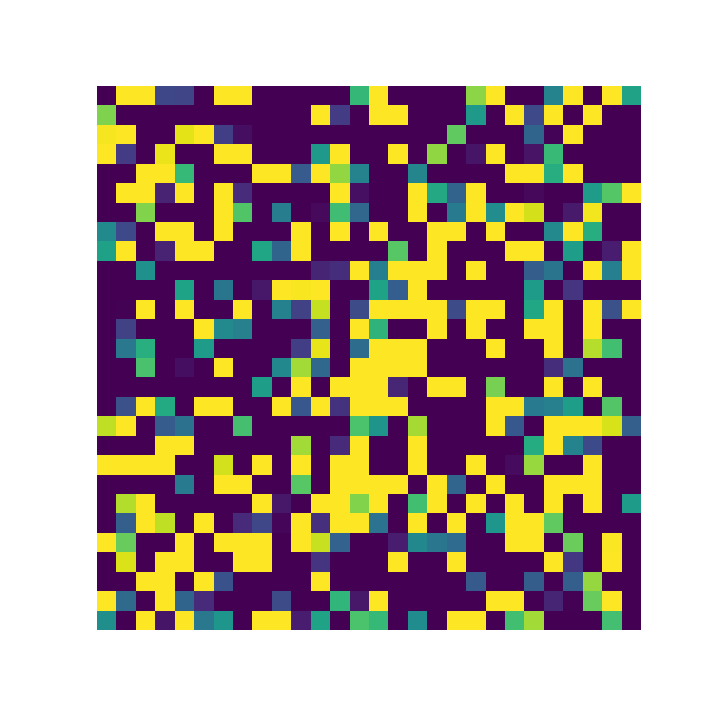}
			\caption*{Observation}
		\end{subfigure}
		\begin{subfigure}{.2\textwidth}
			\centering
			\includegraphics[width=\textwidth]{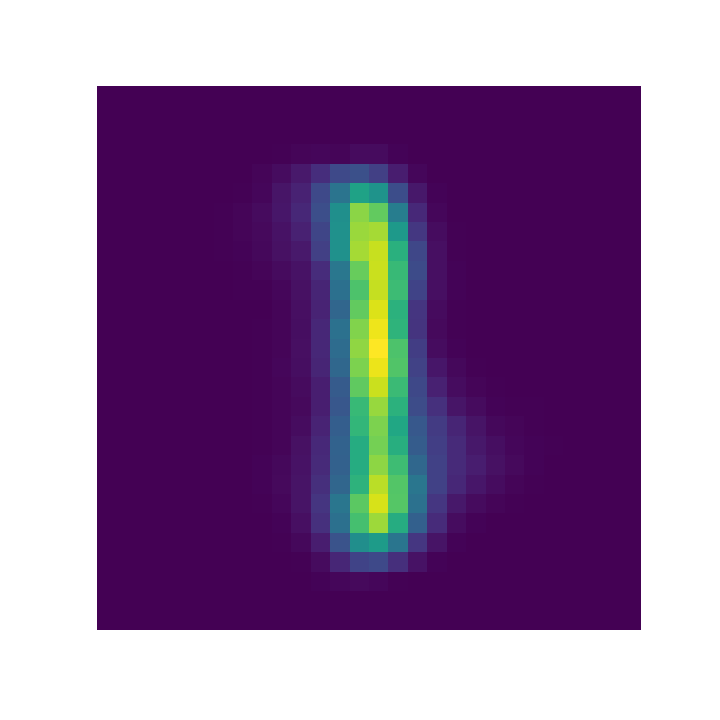}
			\caption*{MMSE}
		\end{subfigure}\\
		\begin{subfigure}{.8\textwidth}
			\hfill\\
			\centering
			\includegraphics[width=\textwidth]{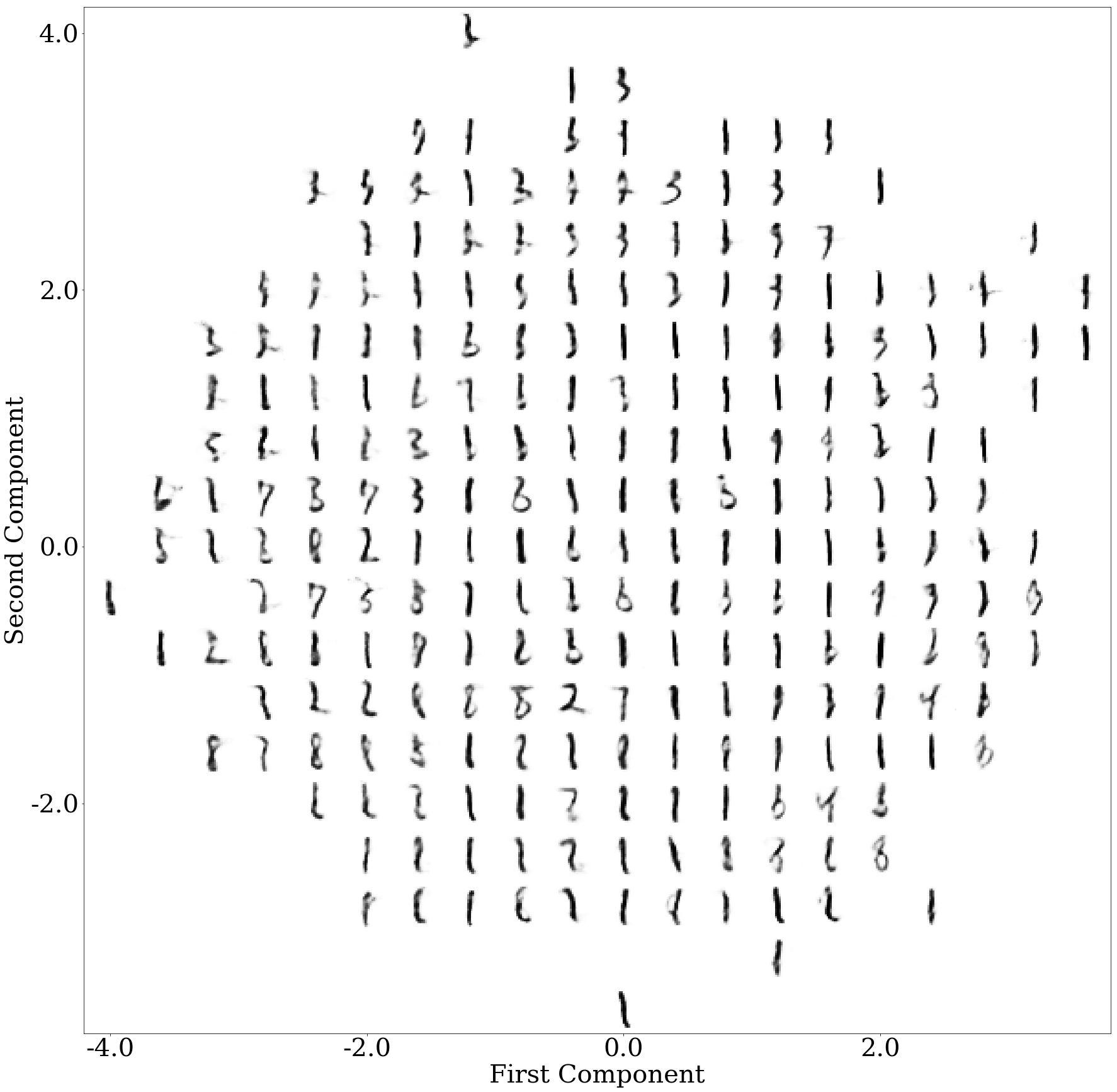}
		\end{subfigure}
	\caption{Map of uncertainty in principal components for deblurring.}
	\label{pca0}
\end{figure}

\subsection{Detecting Out of Dataset Examples} \label{margLik}
When using a data-driven prior it is implicitely assumed that the true data generating the observation is an i.i.d. sample from the same distribution as the training data. This assumption can be tested, in the spirit of Neyman--Pearson, by estimating the marginal likelihood of the observed data under the model.  This is given by
\[
p(y) = \int_{\mathbb{R}^m} p(y|z) p(z) \, dz.
\]
Given an observation, $y$, and $k$ chains of samples at temperatures $\{T_i\}_{i=1}^k$ generated by Algorithm \ref{ParallelpCN}, the marginal likelihood can be estimated \cite{friel2008marginal} by

\[
\log p(y) \approx \sum_{i=1}^{k-1} (T_{i+1} - T_i) \frac{\mathbb{E}_{z|y,T_{i+1}} \{\log p(y|z) \} + \mathbb{E}_{z|y,T_{i}} \{ \log p(y|z) \} }{2},
\]
where $\mathbb{E}_{z|y,T_{i+1}}$ denotes the expectation with respect to the density at temperature $T_i$, given by $p(z|y,T_i) \propto p(y|z)^{T_i} p(z)$.  Having generated samples $\{z_j^{(i)}\}$ at temperatures $T_i$, a Monte Carlo estimate of these expectations can be calculated with little extra cost:
\[
\mathbb{E}_{z|y,T_{i}} \{ \log p(y|z) \} \approx \frac{1}{N} \sum_{j=1}^N \log p(y|z^{(i)}_j).
\]
By estimating the marginal log-likelihood of observations from the training data, a reference distribution of marginal log-likelihood values can be constructed, for comparison with new observations.  The marginal likelihood of a new observation $y$ may then be compared with the reference distribution to test whether the newly observed data are generated from the same distribution, and misspecification can be diagnosed at a given confidence level.  To test at the 5\% significance level, for example, one simply takes as critical value the 5th centile of the  marginal likelihoods obtained from the training set, and reject any new observations with likelihood below the critical value. This is illustrated in Figure \ref{boxplots}, which shows boxplots of the marginal likelihood of the data for in-dataset (MNIST) and out-of-dataset (notMNIST \cite{bulatov2011notmnist}) examples, for each of the forward operators from Section \ref{subsec:ptEstimates} and the lowest noise level ($\sigma = 0.1, 0.01, 0.01$  for denoising, inpainting and deblurring respectively).  Non-parametric likelihood ratio tests at the 1\% level have empirically estimated powers of 99.6\%, 88.5\% and 99.8\% for denoising, inpainting and deblurring respectively when distinguishing between the MNIST and notMNIST datasets.  The distributions are displayed as histograms in Figure \ref{histograms}.

\begin{figure}[t]
	\centering
	\begin{minipage}{0.32\textwidth}
		\centering
		\begin{subfigure}{\textwidth}
			\centering
			\includegraphics[width=\textwidth]{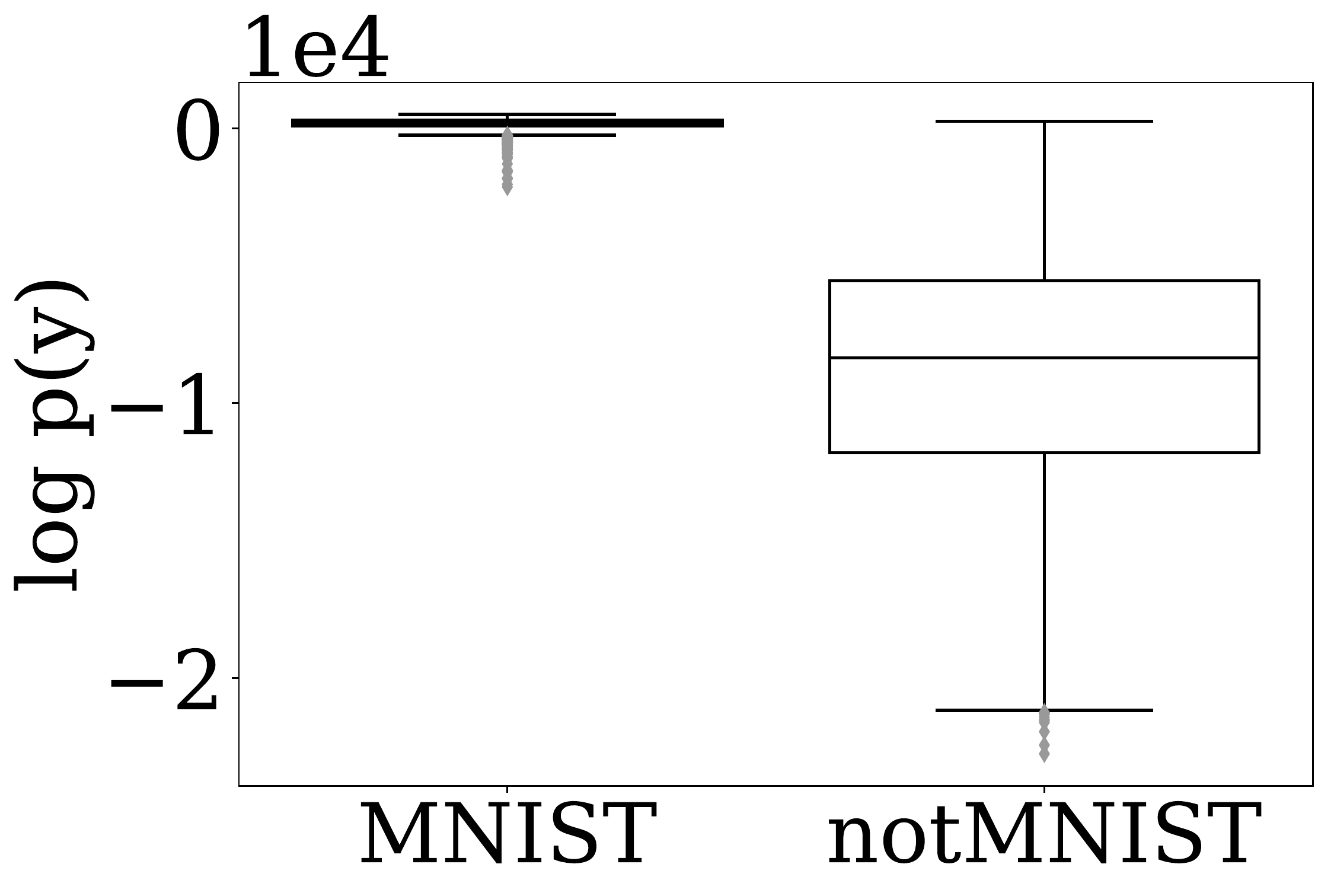}
		\end{subfigure}
		\caption*{Denoising}
	\end{minipage}\hfill
	\begin{minipage}{0.32\textwidth}
		\centering
		\begin{subfigure}{\textwidth}
			\centering
			\includegraphics[width=\textwidth]{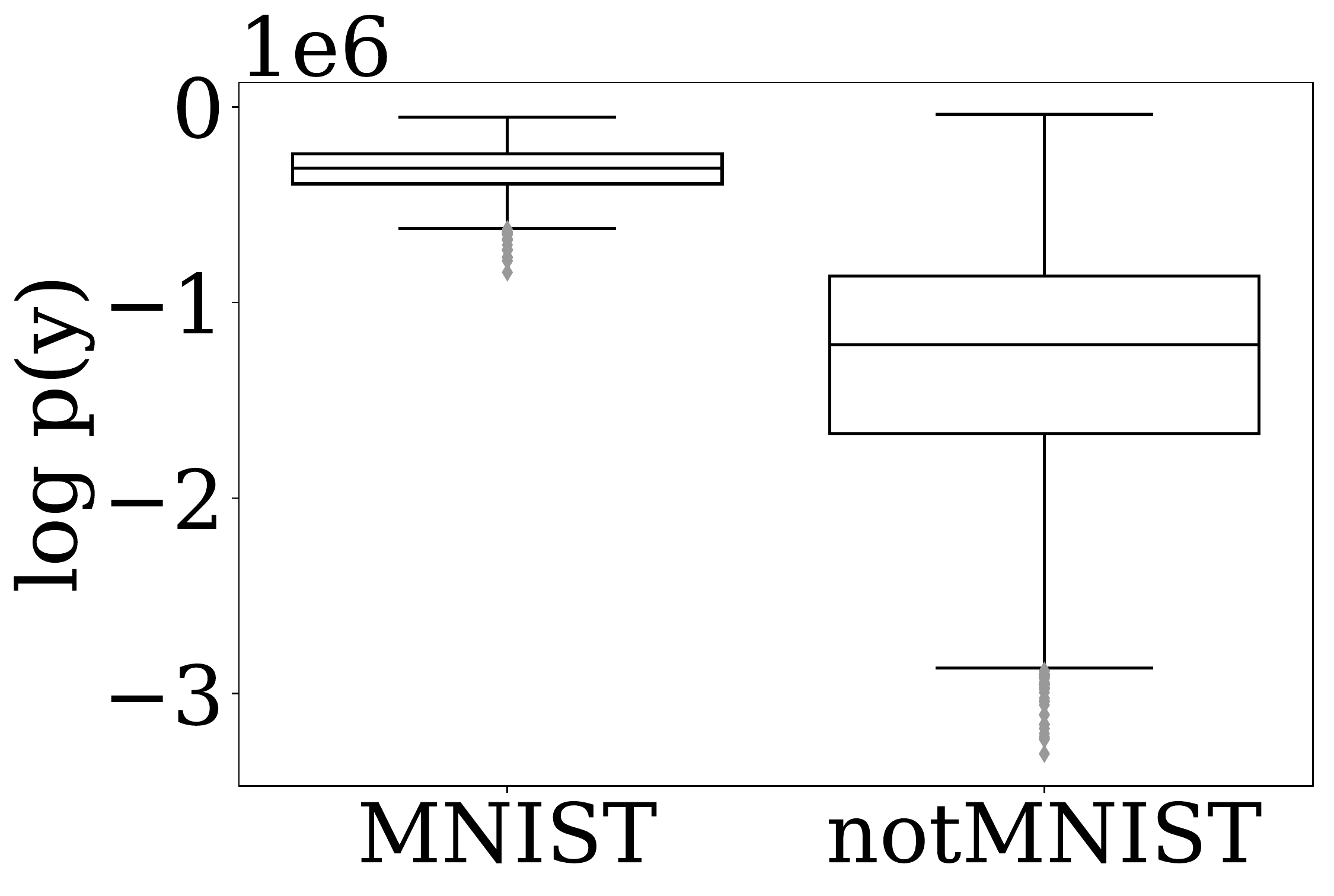}
		\end{subfigure}
		\caption*{Inpainting}
	\end{minipage}\hfill
	\begin{minipage}{0.32\textwidth}
		\centering
		\begin{subfigure}{\textwidth}
			\centering
			\includegraphics[width=\textwidth]{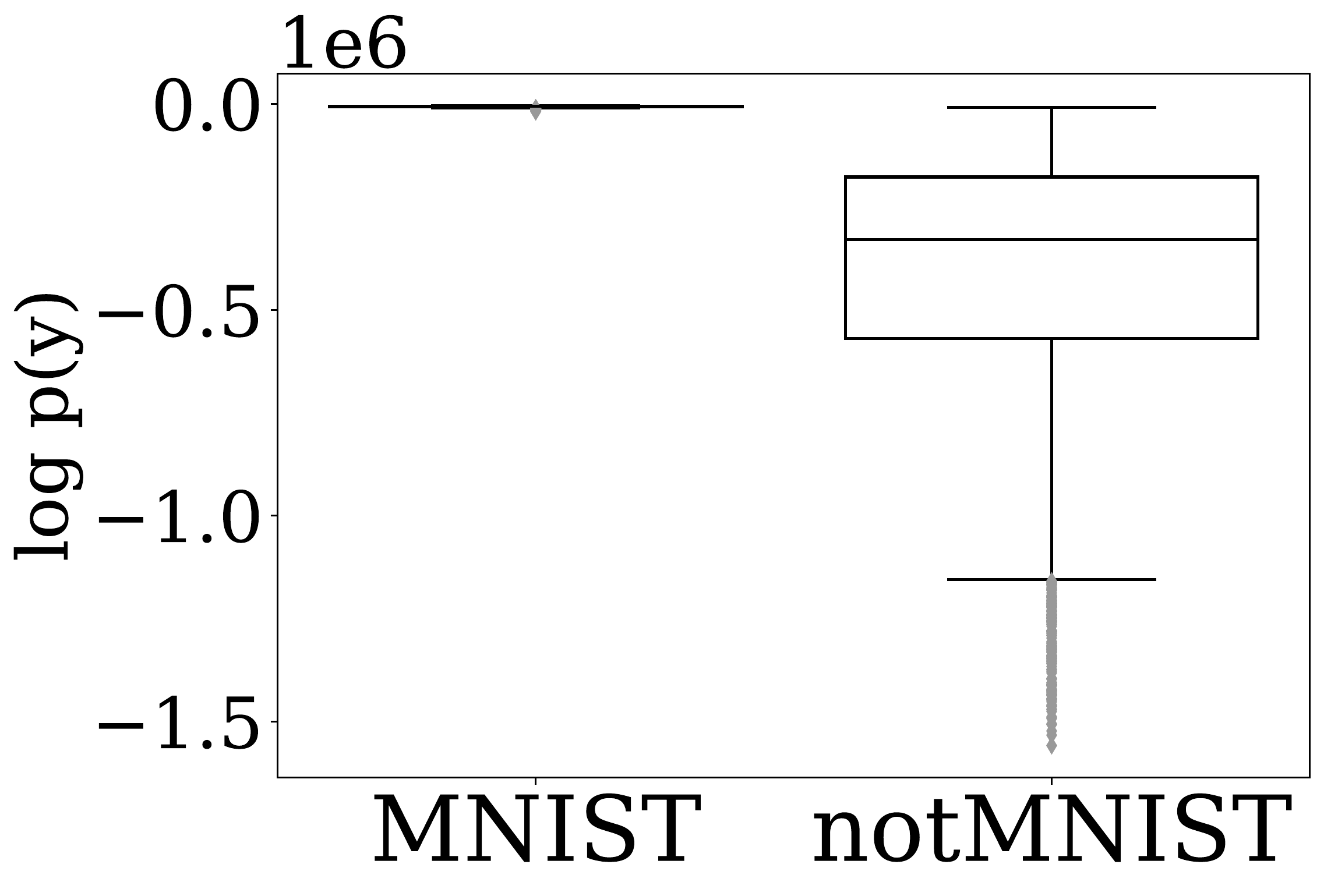}
		\end{subfigure}
		\caption*{Deblurring}
	\end{minipage}
	\caption{Boxplots of marginal likelihood of in- and out-of-dataset observations for 3 forward operators.  The noise levels were set at $\sigma=0.1$, 0.01 and 0.01 for denoising, inpainting and deblurring respectively.  Non-parametric likelihood ratio tests at the 1\% level to distinguish between observations from the MNIST and notMNIST datasets  have empirically estimated powers of  99.6\%, 88.5\% and 99.8\% for denoising, inpainting and deblurring respectively.}
	\label{boxplots}
\end{figure}

\begin{figure}[h!]
	\centering
	\begin{minipage}{0.3\textwidth}
		\centering
		\begin{subfigure}{\textwidth}
			\centering
			\includegraphics[width=\textwidth]{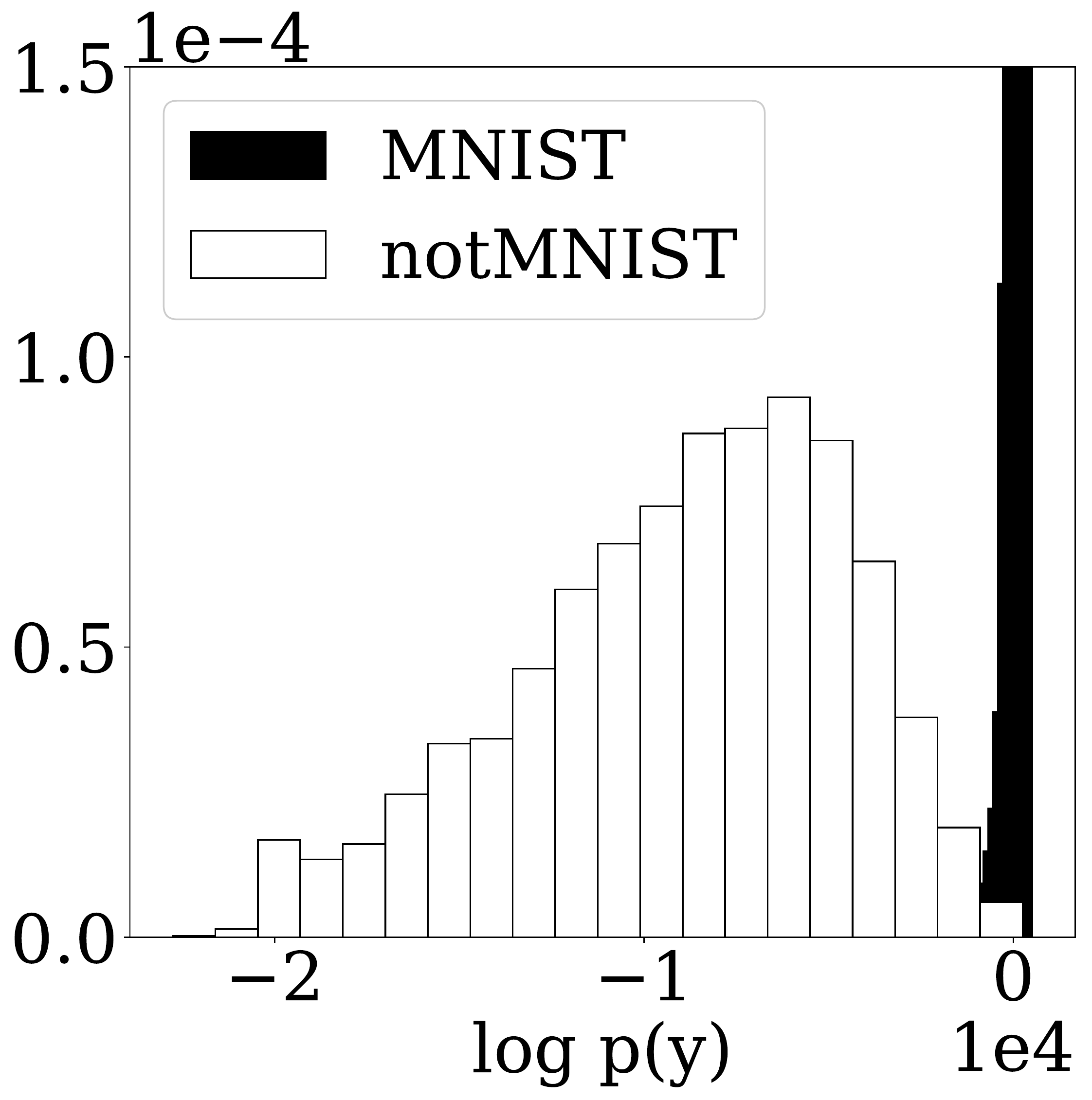}
			\caption{Denoising}
		\end{subfigure}
	\end{minipage}\hfill
	\begin{minipage}{0.3\textwidth}
		\centering
		\begin{subfigure}{\textwidth}
			\centering
			\includegraphics[width=\textwidth]{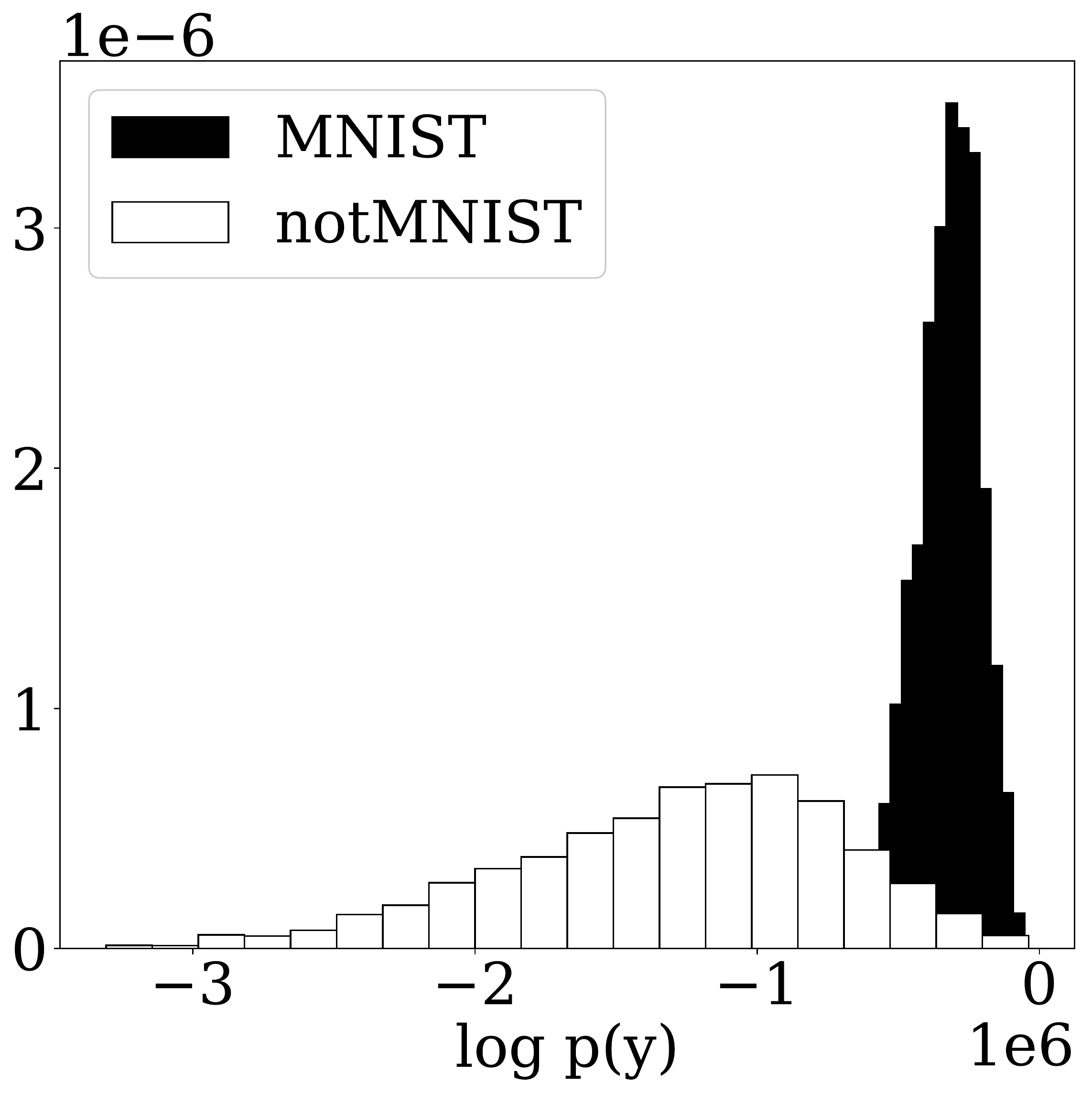}
			\caption{Inpainting}
		\end{subfigure}
	\end{minipage}\hfill
	\begin{minipage}{.3\textwidth}
		\centering
		\hfill
		\begin{subfigure}{\textwidth}
			\centering
			\includegraphics[width=\textwidth]{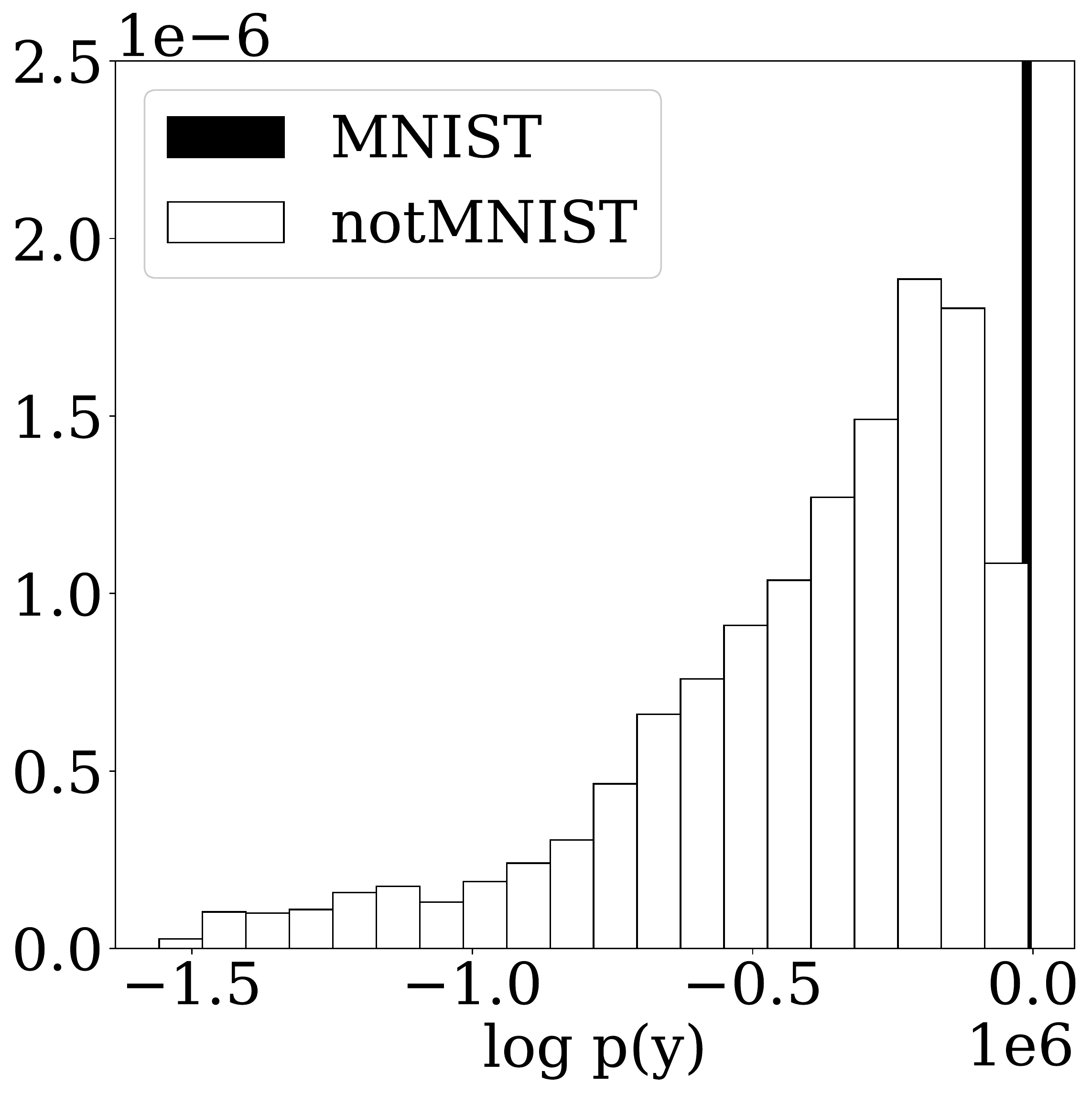}
			\caption{Deblurring}
		\end{subfigure}
	\end{minipage}
	\caption{Histograms of marginal likelihood of in- and out-of-dataset observations for experiments in Figure \ref{boxplots}. For denoising and deblurring, the distribution of in-dataset marginal likelihoods is much more concentrated than out-of-dataset, so the y-axis is restricted to better visualise the out-of-dataset distribution (Figures (a) and (c)).}
	\label{histograms}
\end{figure}

\subsection{Frequentist Coverage Properties} \label{Coverage}

Finally we present an experiment interrogating the quality of the prior in a frequentist manner.  Currently, the priors used in imaging problems result in posterior probabilities which are only true in the subjective sense of the model, rather than a frequentist sense.  Under a frequentist interpretation, the probabilities should match the frequency observed over a large number of trials. If $\{x_1,...,x_n\}$ were observations of the quantity being modelled, then for large n, one should observe that

\[
\frac{1}{n} \sum_{i=1}^n \mathbbm{1}_A(x_i) \approx \mathbb{P}(x \in A) = \int_A p(x|y) p(y) \, dy.
\]

In practice, however, empirical averages do not coincide with the reported posterior probabilities. Data-driven priors may offer a solution to this problem, and could report probabilities which are much closer to the empirical averages observed.  To test the properties of the model being used here, we perform a test of the coverage of the credible intervals reported by the model. This is done as follows:
\begin{enumerate}
	\item Given a set $\{x^\dagger_i\}_{i=1}^n$ from the testing data, generate a corresponding observation $\{y^\dagger_i\} \sim p(y|x=x^\dagger_i)$.
	
	\item For each $y^\dagger_i$, sample the posterior to find credible intervals for different levels $\alpha$, i.e. find a region $A_i$ such that the posterior probability $\mathbb{P}(x \in A | y=y^\dagger_i) = 1 - \alpha$.
	
	\item Observe the proportion of credible intervals which contain the true value $x^\dagger_i$.  If the prior is true in a frequentist sense, this should be close to $1-\alpha$.
\end{enumerate}

In this case, since the posterior distribution is defined on the latent space, but the true quantity of interest is the image $x$ in the pixel space, it is difficult to construct such a set $A$ from the samples generated.  However, the encoder of the VAE can be used to conduct an approximate test.  For a set of true data $\{x^\dagger_i\}_{i=1}^n$ in the test set, we generate observations $\{y^\dagger_i\}_{i=1}^n$ according to the forward model, and sample the posterior as outlined in Section \ref{subsec:posteriorlatent}.  From these samples, we wish to generate non-parametric credible intervals (highest posterior density intervals) using quantiles of the posterior density, and compare the empirical proportion of $\{x^\dagger_i\}$ lying within the credible interval to the theoretical probability.  By passing $x^\dagger_i$ through the VAE encoder, we can generate $z_i \sim q_\phi(z|x^\dagger_i)$ such that $x^\dagger_i\approx \mu_\theta(z_i)$, and likewise for the posterior samples.  The coverage test is then performed by evaluating the posterior at these encoded values.  An error is introduced in this process, since the encoder is not an exact inverse of the decoder,  but the coverage nevertheless gives an indication of the suitability of the prior.

The coverage is tested for differing noise levels and forward operators (Figure \ref{coverage}).  For denoising, the empirically observed coverage is remarkably close to the theoretical values, indicating that the posterior probabilities reported are accurate in a frequentist sense. For large levels of noise, the actual coverage slightly exceeds the theoretical, indicating that the models are mildly conservative. Similar results also exhibited in the large-noise level cases of inpainting and deblurring.  As one would expect, the prior misspecification has a larger effect on the ill-posed problems of inpainting and deblurring, and the credible intervals contain the encoded true image with a higher frequency than the probability reported. In any case, in all the considered problems, the difference between the theoretical and the empirical coverage is small enough to be compensated by calibration (i.e., by using empirically estimated thresholds from Figure \ref{coverage} to specify the highest posterior density), particularly so for the range of values that is most relevant for Bayesian hypothesis testing (e.g, from 80\% and 99\%). It is worth emphasising at this point that, to the best of our knowledge, this is the first example of a Bayesian imaging model that has accurate frequentist properties, albeit for a simple dataset. 

Moreover, since the empirical coverage results rely on accurate sampling of the posterior, the inaccurate coverage may partly be a result of sampling error, particularly in the case of deblurring, where the ill-conditioning makes full exploration of the posterior more difficult.  The inaccurate coverage for low noise levels supports this idea, since the mass of the posterior is more concentrated, and exploration of the full posterior more difficult as a result. We therefore expect that the coverage should be even better when we adopt a gradient-based variant of pCN that is currently under development.

\begin{figure}[t]
	\centering
	\begin{subfigure}{.3\textwidth}
		\centering
		\includegraphics[width=\textwidth]{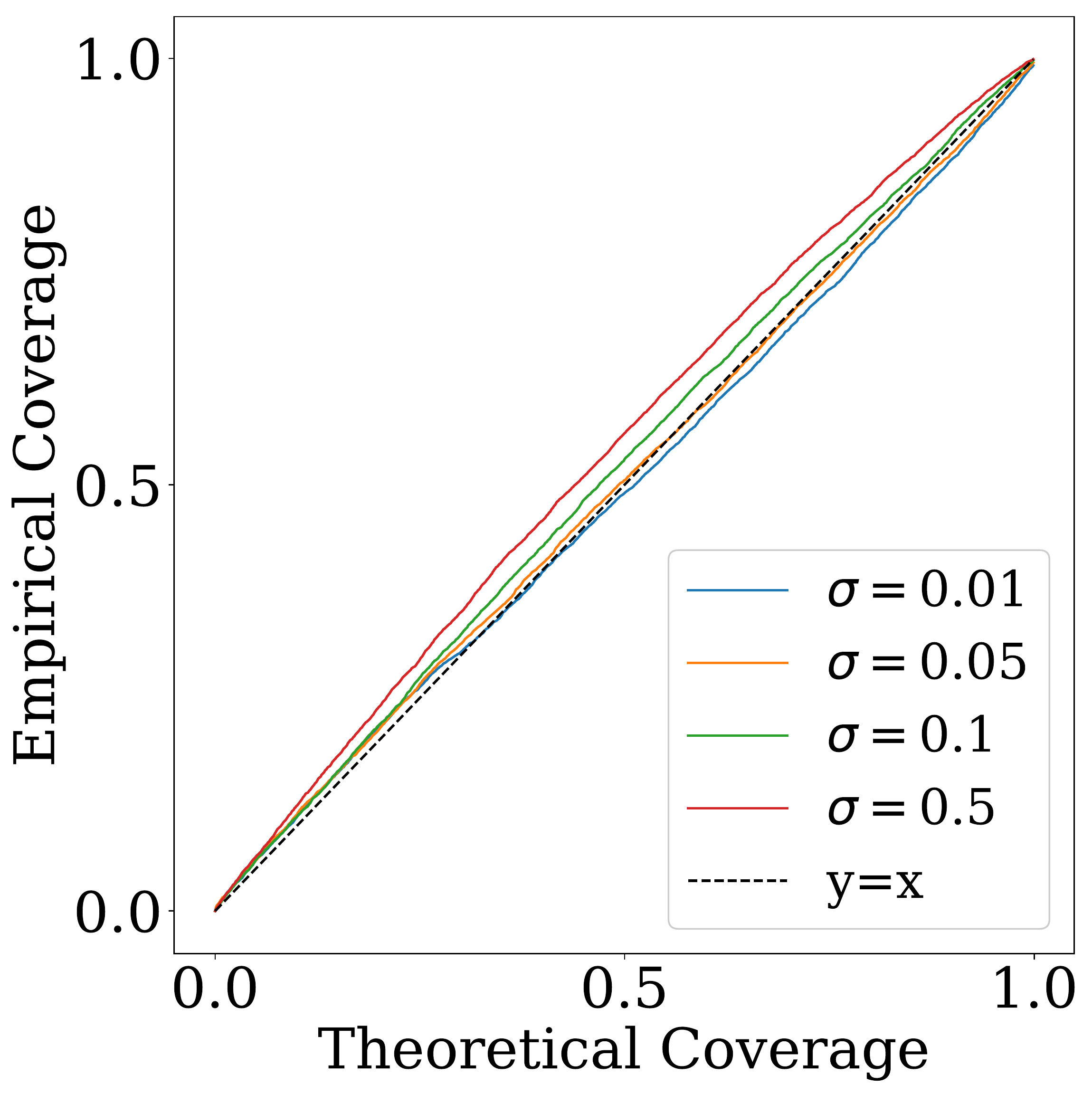}
		\caption{Denoising}
	\end{subfigure}
	\begin{subfigure}{.3\textwidth}
		\centering
		\includegraphics[width=\textwidth]{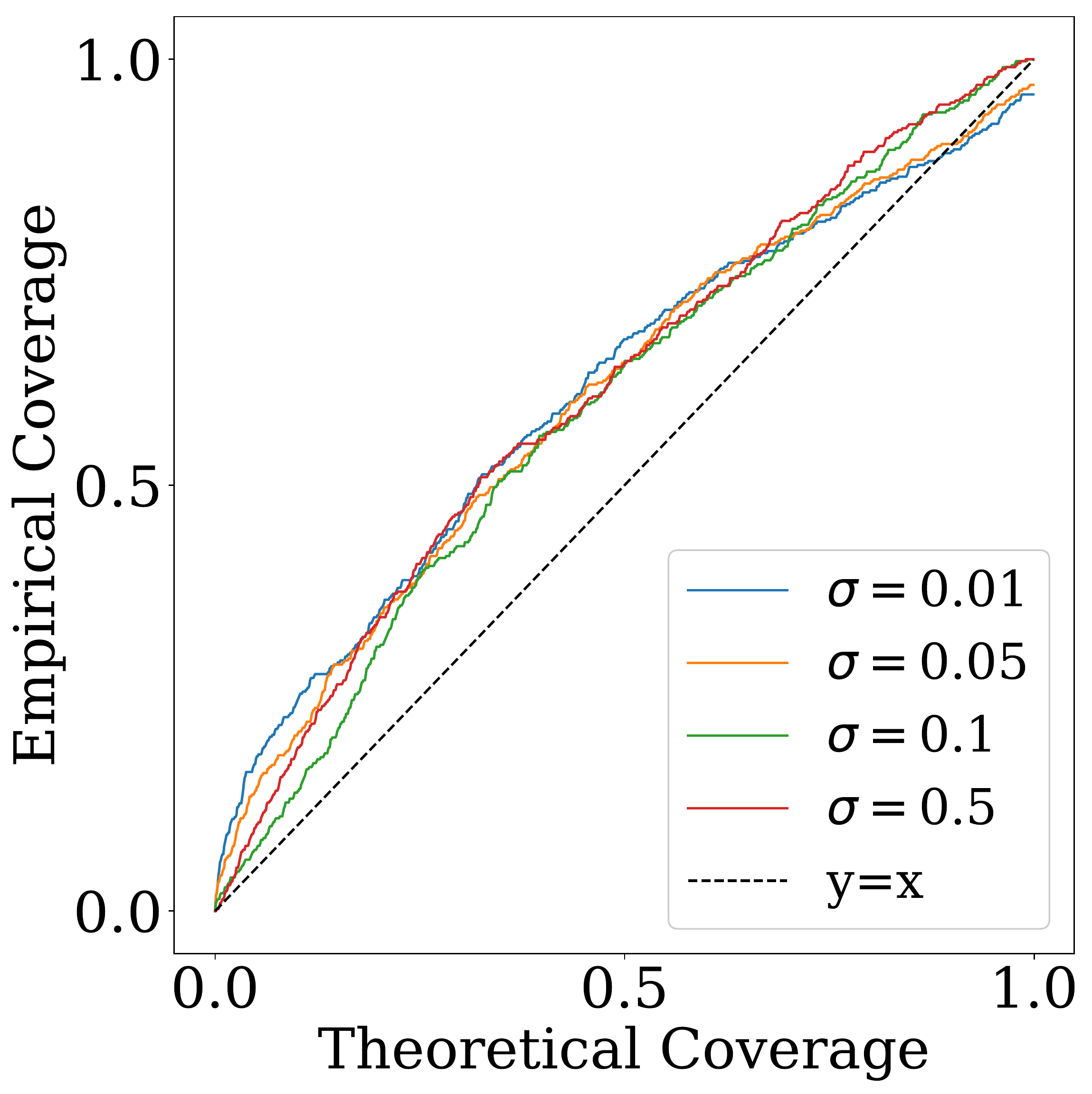}
		\caption{Inpainting}
	\end{subfigure}
	\begin{subfigure}{.3\textwidth}
		\centering
		\includegraphics[width=\textwidth]{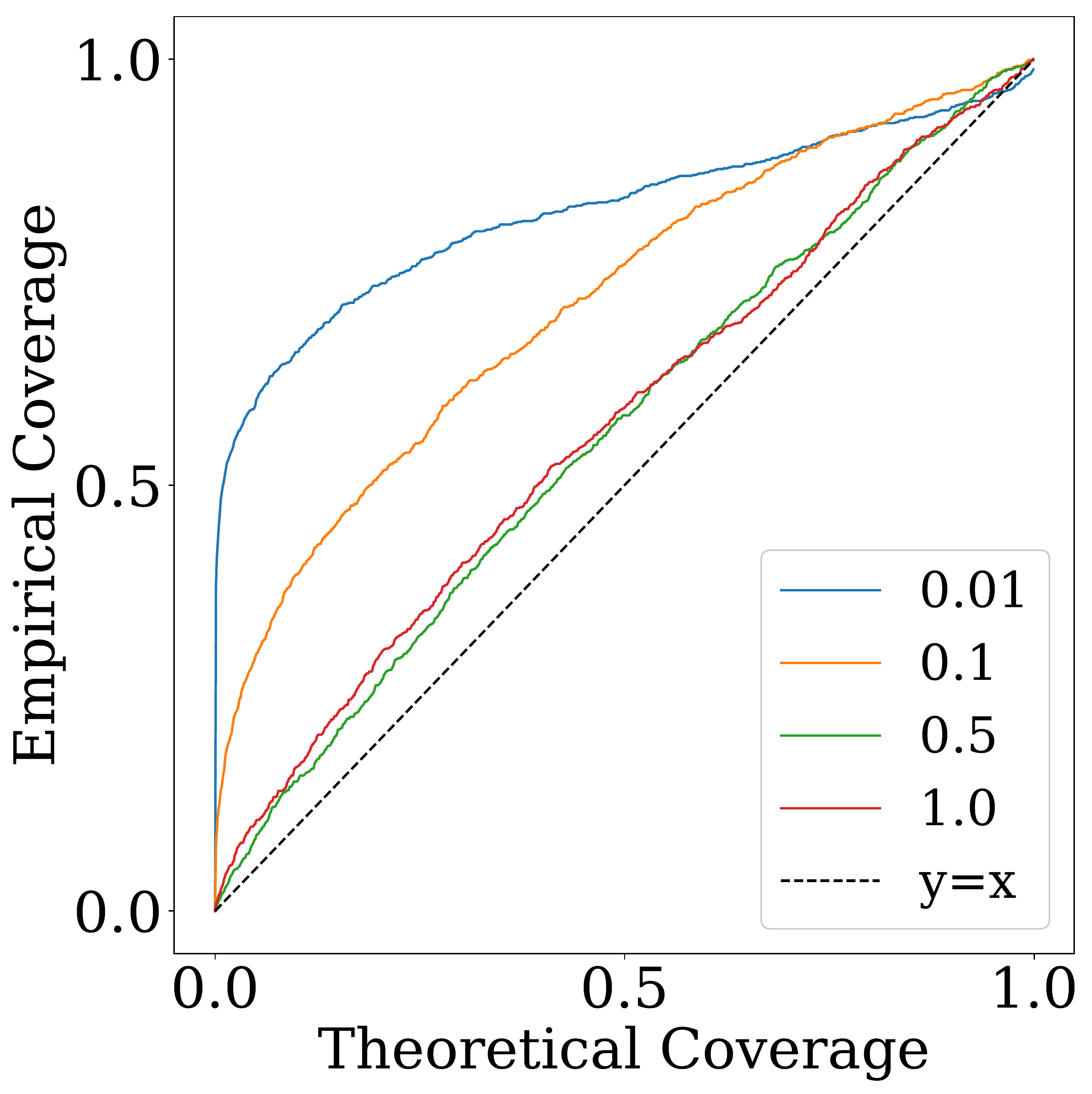}
		\caption{Deblurring}
	\end{subfigure}
	\caption{Coverage Tests.}
	\label{coverage}
\end{figure}

\section{Conclusion} \label{sec:discus}
This work presented a new Bayesian methodology for solving inverse problems in imaging sciences, with a focus on applications with training data available. The methodology is based on a data-driven prior that is supported on a manifold, which is encoded by a neural network and learnt from the training data. We established that the posterior distribution associated with this prior is well-posed in the sense of Hadamard under easily verifiable conditions, which are in particular verified by the widely used Gaussian linear forward model. Similarly, posterior moments are also guaranteed to exist and to be well-posed, providing a rigorous underpinning for Bayesian estimators such as the posterior mean. In order to perform Bayesian computation with the proposed data-driven prior, we constructed a parallel tempered pCN Monte Carlo algorithm on the low-dimensional latent space that is proven to be ergodic and robust to multi-modality. This then allowed computing the MMSE estimator as well as performing uncertainty quantification and model misspecification analyses.  The performance of the method was illustrated with an application to the MNIST dataset for the tasks of image denoising, inpainting and deconvolution, and it was observed that the MMSE estimates were more robust to noise and poor conditioning than competing approaches from the state of the art based on MAP estimation.  In addition, we demonstrated how to detect observations generated from out-of-dataset images, thus tackling one of the main risks of using a data-specific prior.  The frequentist properties of our model were also examined and found to be remarkably accurately. To the best of our knowledge, this is the first example of a Bayesian model with accurate frequentist coverage properties in an imaging setting. We also presented a PCA-based strategy for visualising the uncertainty in the solutions delivered and discussed how to infer the dimension of the latent manifold directly from the training data. 

A main perspective for future work is to scale the proposed methodology to large images for example by exploring the use of gradient-based MCMC algorithms \cite{titsias2018auxiliary}, by developing a patch-based approach \cite{Zoran2011,houdard2018high, teodoro2020block}, and by replacing the standard VAE approach with a generative model that is well-posed in infinite dimensions \cite{dupont2021generative}. An alternative, but related, line of research would be to consider a model where the prior is defined not on a manifold defined by a generative model, but rather by the fixed point set of a denoising algorithm, as in \cite{cohen2020regularization}.  That work takes a variational approach to the problem, but it would be interesting to extend this rigorously within a Bayesian framework. Future work could also study the application of the proposed methodology to other imaging problems, particularly problems that are severely ill-posed and that involve highly non-regular models as in low-photon imaging.

\section*{Acknowledgments}
The authors are grateful for useful discussions with Andr\'{e}s Almansa.

\end{document}